\documentclass[final,5p,times]{elsarticle}

\usepackage{amssymb}
\usepackage{amsmath}
\usepackage{microtype}
\usepackage{natbib}
\usepackage{threeparttable}
\usepackage{hyperref}
\usepackage{algorithm}
\usepackage{algpseudocode}
\usepackage{booktabs}
\usepackage{bbold}
\usepackage{xcolor}
\usepackage{amsthm}
\newtheorem{theorem}{Theorem}
\newtheorem{assumption}{Assumption}
\newtheorem{lemma}{Lemma}

\newcommand{\jr}[1]{{\color{black} #1}}

\begin{document}

\begin{frontmatter}



\title{Load Management of Distribution Systems via Online Dynamic Pricing}



\author[inst1]{Jiarui Yu}

\affiliation[inst1]{organization={Laboratoire d'Automatique, EPFL},
            postcode={1015}, 
            state={Lausanne},
            country={Switzerland}}

\author[inst2]{Zhiyu He \corref{cor1}}
\author[inst1]{Wenbin Wang}
\author[inst1]{Colin N. Jones}
\author[inst2]{Florian Dörfler}
\author[inst3]{Hanmin Cai}

\cortext[cor1]{Corresponding author at: Automatic Control Laboratory, ETH Zürich, Switzerland.
 E-mail address: \nolinkurl{zhiyhe@control.ee.ethz.ch}.}

\affiliation[inst2]{organization={Automatic Control Laboratory, ETH Zürich},
            postcode={8092}, 
            state={Zürich},
            country={Switzerland}}

\affiliation[inst3]{organization={Urban Energy Systems Lab, EMPA},
            postcode={8600}, 
            state={Dübendorf},
            country={Switzerland}}

\begin{abstract}
The growing adoption of electric vehicles (EVs) is increasing peak demand in distribution systems, which can threaten grid stability and reduce operational efficiency. Dynamic electricity pricing is a promising means of mitigating these peaks by shifting flexible demand. However, most existing approaches rely on detailed user-level consumption data and behavioral models, which are often difficult to obtain in practice and may raise privacy concerns.

This paper proposes an Online Feedback Optimization (OFO) algorithm for day-ahead price design with limited data, where only aggregate loads are observed. OFO updates prices iteratively using aggregate load measurements, enabling effective peak reduction without access to individual user data. The formulation also includes a term that penalizes deviations in total electricity cost relative to a reference tariff. Although relying only on aggregate load measurements, the OFO price updates efficiently converge to the optimal price. 
In finite-horizon simulations, OFO achieves peak reduction close to that of the Stackelberg benchmark with full model information. Meanwhile, its computational effort is substantially lower. Additional tests under multiple initial conditions and delayed charging-window mismatch further confirm the robustness of the proposed method. Overall, these results show that OFO is a scalable and computationally efficient approach for peak-demand management in distribution systems with limited observability.

\end{abstract}




\begin{keyword}
Online feedback optimization \sep electrical vehicle charging \sep dynamic pricing design
\end{keyword}

\end{frontmatter}


\section{Introduction}
\label{sec:intro}

Modern energy systems are evolving with the increasing integration of flexible loads, larger industrial and commercial demands, and distributed energy resources (DERs) into the distribution network~\cite{LEE_integrating_2026}. 
Within this transition, the rapid growth of flexible electrical loads, most notably electric vehicles (EVs), is reshaping power system operation and introducing new operational challenges.
When left uncontrolled, EV charging can significantly amplify peak demand. Previous studies report peak increases of up to 74.4\% at full penetration~\cite{Jaru_predicting_2023}, which can lead to voltage instability, transformer overloading, and thermal stress on network components~\cite{GAGGERO2025100395}.
\jr{To mitigate such congestion, distribution system operators can employ a range of congestion management approaches, such as static tariffs, dynamic tariffs, scheduled load management, local flexibility markets, and direct load control~\cite{mood_review_2026}. While scheduled load management can coordinate flexible demand explicitly, it requires prior knowledge of device availability, flexibility constraints, and user preferences which may be difficult to obtain for large and heterogeneous consumer populations~\cite{nikos_optimal_2015}. 
Similarly, direct load control, local flexibility markets, and aggregate-control approaches can provide reliable congestion relief by coordinating flexible demand more explicitly, as shown by controllable-load frameworks for balancing and reserve services~\cite{callaway_achieving_2011,ma2013decentralized} and aggregate setpoint-based control of thermostatically controlled loads~\cite{kundu_modeling_2011}. However, such approaches typically require intrusive or semi-direct control, explicit load models, communication infrastructure, or complex market designs, and may raise concerns related to user acceptance, strategic behavior by market participants, or cost recovery~\cite{hennig_congestion_2023}.}
Pricing-based schemes, by contrast, offer a scalable and non-intrusive means of influencing consumer behavior. By leveraging price signals to activate demand-side flexibility, energy retailers and DSOs can reduce congestion without direct intervention or detailed modeling of individual loads~\cite{schreiber_flexible_2015}.
Well-designed pricing strategies have been shown to help smooth demand fluctuations, improve grid efficiency, and reduce costs for consumers~\cite{rasheed_dynamic_2020}.

Despite these benefits, converting pricing-based flexibility into reliable peak reduction at the distribution level remains challenging. In particular, fairness considerations often require uniform price signals to be broadcast to many consumers, which can trigger synchronized responses and produce rebound peaks that undermine system stability~\cite{nazir_load_2017}. Designing pricing mechanisms that balance incentive effectiveness, system stability, and fairness therefore remains an ongoing challenge.

Static schemes such as flat-rate and time-of-day tariffs have been widely used in traditional energy systems but fail to adapt to real-time variations in demand and supply~\cite{faruqui_household_2010}. More advanced methods, such as multi-step electricity price schemes seek to overcome these challenges by considering both time steps and quantity~\cite{lin_multistep_2016}. However, these pricing schemes are not fully dynamic and still depend on fixed peak-hour definitions, which can unintentionally shift demand and create new peaks during supposed off-peak periods. These limitations underscore the need for a more adaptive pricing approach that can align consumer behavior with real-time system conditions.

Existing dynamic pricing research often relies on simplified behavioral models to estimate user responses to price signals. For example, 
utility functions are widely used to characterize user behaviors with different variants like specific user behavior based~\cite{cui_pricing_2025}, logistic based~\cite{li_realtime_2021} and microeconomic principle based~\cite{tarasak_optimal_2011} formulations.
However, utility-based models usually rely on a predefined utility function to describe how consumers respond to price changes. While convenient, this assumption may oversimplify real demand behavior, which often varies across users in both flexibility and consumption needs.
Even with utility function trained with customized historical data~\cite{subramanian_dynamic_2013}, it still struggles when user responses shift unexpectedly due to personal preferences or external factors.
Other simplified models also make strong behavioral assumptions. For instance, stationary users~\cite{lei_realtime_2016}, known properties~\cite{rabie_novel_2026} and perfectly rational decisions~\cite{wang_new_2021} are widely assumed.
Such assumptions limit the robustness of these methods in dynamic environments with uncertainty and incomplete observability.

\begin{figure*}[ht]
    \centering
    \includegraphics[width=\linewidth]{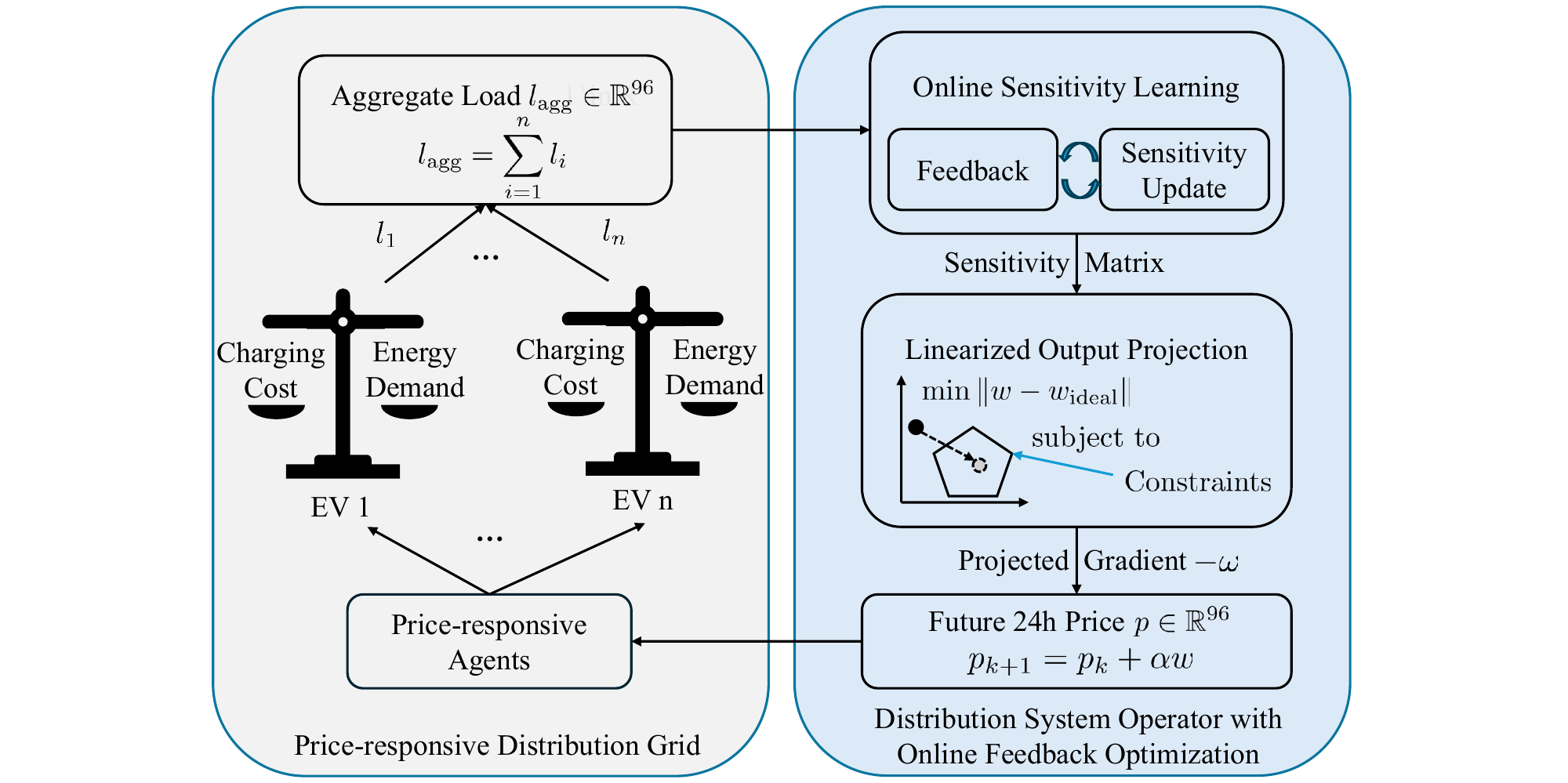}
    \caption{Interaction between the Distribution System Operator and the Distribution Grid via Online Feedback Optimization.}
    \label{fig:prob_statement}
\end{figure*}

To better reflect real consumer behavior, many dynamic pricing approaches adopt data-driven methods that rely on historical or real-time observations. These methods typically estimate user participation and preferences from frequent feedback~\cite{chiu_optimized_2017} or detailed consumption records~\cite{qiu_personalized_2023}, requiring extensive communication and access to individual-level data. While such information can improve pricing accuracy, it raises privacy concerns and increases system complexity.
More recent learning-based approaches~\cite{palaniyappan_dynamic_2024} further leverage high-resolution measurements, such as voltage profiles or power quality indicators, to enhance prediction performance. However, these signals are rarely available at scale in distribution networks and are often subject to privacy, access, or communication constraints. Other strategies may depend on long-term smart-meter data to infer individual responses~\cite{taherian_optimal_2021}, which limits their applicability in environments with partial observability or restricted data access. Overall, although data-driven methods perform well under full observability, their reliance on granular user-level information constrains privacy preservation and hinders scalable deployment in real-world distribution systems.


 \jr{A further concern in dynamic pricing design is implementation complexity, as many existing approaches rely on large-scale or nested optimization procedures that impose computational burdens and may require repeated communication.} Population-based optimization methods, such as particle swarm algorithms, search iteratively over many candidate solutions~\cite{guo_how_2023}, resulting in substantial computational overhead that limits real-time applicability~\cite{poli_particle_2007}. Stackelberg game--based formulations~\cite{erol_stackelberg_2024,jia_dynamic_2016} introduce additional complexity due to their hierarchical leader--follower structure, where the leader's pricing decision anticipates the followers' responses and leads to bi-level optimization problems. These problems must either be solved iteratively using decomposition methods, which require solving two interdependent subproblems or be reformulated as single-level problems with additional integer variables or nonlinear constraints~\cite{behnke_computational_2024}. Both approaches increase computational effort and implementation burden. Recent learning-based extensions further increase computational demands~\cite{hu_mixed_2025}. Although these approaches are conceptually powerful, their high complexity hinders scalability and real-time deployment in realistic distribution grids.

Overall, despite significant progress, many existing pricing methods rely on detailed consumer information, assume simplified or fully rational responses, or require substantial computational resources. In practice, these assumptions and requirements are difficult to satisfy in real distribution systems, where user behavior is heterogeneous, data access is limited, and fast decision making is essential. These limitations underscore the need for a pricing mechanism that can adapt to actual consumer behavior while preserving privacy and remaining computationally scalable under limited observability.

To meet these requirements, this study introduces Online Feedback Optimization (OFO) as a practical approach for day-ahead price design with aggregate data. OFO performs light-weight, gradient based updates using real time feedback from measured system responses instead of relying on detailed behavioral or system models~\cite{hauswirth_optimization_2024}. This feedback-driven and model-free structure enables automatic adaptation to evolving demand, preserves consumer privacy, and significantly reduces computational burden compared with optimization or game-theoretic formulations.
The main contributions of this work are summarized as follows:
\begin{enumerate}
    \item \textbf{Online dynamic pricing algorithm with limited observation:} An OFO approach based on online sensitivity learning is implemented with event-triggered updates and warm initialization to improve convergence speed, robustness, and practical applicability under realistic load dynamics. The method relies only on aggregate load measurements, which preserves user privacy and supports scalable deployment in distribution systems with limited observability.

\item \textbf{Theoretical guarantees of convergence and feasibility:} 
We establish almost sure convergence of the proposed online dynamic pricing scheme to the set of projected stationary points under standard stochastic approximation assumptions. To facilitate clear proof, the convergence analysis is performed on a differentiable smoothened reformulation. The practical implementation and reported case-study results, however, are based on the original formulation, since it preserves the operational objective and constraints of interest. Feasibility with respect to the original constraints is maintained at every iteration.


\item \textbf{Near-optimal performance on high fidelity simulations:} Extensive simulations using daily charging demand from more than 300 EVs show that OFO reduces the last-14-day average daily peak by $25.3\%$ relative to the reference-price baseline. Despite using only aggregate measurements and a limited number of online updates, OFO leaves a remaining peak gap of only $14.6\%$ relative to the offline Stackelberg benchmark with full individual-level model information. \jr{Additional tests under multiple initial conditions and delayed charging-window mismatch further confirm the robustness of OFO.}

\end{enumerate}

\section{Problem Statement}
\label{sec:prob}

\subsection{Overview}

This section defines the pricing problem considered in this work: designing a day-ahead price mechanism that guides flexible demand in distribution grids to reduce peak load while maintaining operational feasibility. The overall structure of the proposed closed-loop system is illustrated in Fig.~\ref{fig:prob_statement}. The Distribution System Operator (DSO) \jr{computes a 96 dimensional price vector $p \in \mathbb{R}^{96}$ corresponding to 24 hour horizon with 15-minute resolution} and broadcasts it to the price-responsive distribution grid. Price-responsive agents, such as EV charging stations and other flexible devices, then individually adjust their demand. Although the proposed framework applies to general price-responsive loads, this study focuses on EV charging as the primary application.
\jr{Their responses $l_i \in \mathbb{R}^{96}$ are summed at the feeder-head or substation-level aggregation point, yielding the total load profile $l_\text{agg} \in \mathbb{R}^{96}$, which is directly available to the DSO through network measurements.}
 This iterative process forms the basis of an Online Feedback Optimization (OFO) loop, enabling price updates based on real-time system responses while using only aggregate information. 
 
 \jr{
 This setting is related to aggregative-game formulations, where agents are coupled through an aggregate quantity and equilibrium-seeking algorithms are designed for the resulting game~\cite{grammatico_dynamic_2017}. Different from this line of work that focuses on static equilibrium problems with known game structures~\cite{belgio_semi_2023}, this work addresses a concrete distribution-grid pricing problem through iterative price updates based on aggregate load feedback, without requiring prior knowledge of individual response models.
 }

\subsection{Price Design Model}
Instead of explicitly solving the problems of agents and DSO in a separate manner, the proposed approach adopts a bi-level structure similar to a Stackelberg game. As shown in problem~\eqref{eq: ofo_formulation}, the DSO sets price signals, and then agents respond by optimizing their individual consumption. Unlike traditional Stackelberg game formulations, the DSO observes only aggregate load feedback rather than individual responses, as shown in Eq.~\eqref{eq: agg}. Therefore, OFO needs to update prices based on partial information and without explicit follower models, which substantially complicates the problem.

\vspace{-1em}
\begin{subequations}
\begin{align}
\text{OFO:} \quad \mathop{\min}_{p\in\mathbb{R}^{96}} \quad & \max(l_{\text{agg}}) + b(p^Tl_{\text{agg}} - p_{\text{ref}}^Tl_{\text{agg}})^2 \label{eq: ofo_formulation1} \\
\text{subject to} \quad &  p_{\text{min}} \leq p \leq p_{\text{max}}, \label{eq: ofo_formulation2} \\
& 0 \leq l_{\text{agg}} \leq l_{\text{max}}, \label{eq: ofo_formulation3}  \\
& l_{\text{agg}}=\mathop{\sum}_{i=0}^{M}l_i, \quad l_i \in S(p),  \label{eq: agg}
\intertext{\noindent with lower-level optimization}
 S(p) :=& \arg \mathop{\min}_{l_i'\in\mathbb{R}^{96}}   \mathop{\sum_{i=0}^M}p^Tl_i' + 0.01\mathop{\sum_{i=0}^M}l_i'^Tl_i', \label{eq: ofo_formulation5}\\
 \text{subject to}& \quad  c_i^T l_i'=d_i, \quad i = 0, \ldots,M, \label{eq: ofo_formulation6} \\
& \quad 0 \leq l_i' \leq l_{\text{max}, i} , \quad i = 0, \ldots,M. \label{eq: ofo_formulation7}
\end{align}
\label{eq: ofo_formulation}
\end{subequations}

The DSO-level problem is defined by Eq.~\eqref{eq: ofo_formulation1}--\eqref{eq: ofo_formulation3}. Its
decision variable is the day-ahead price vector $p\in\mathbb{R}^{96}$, and the parameter is
the aggregate load $l_{\mathrm{agg}}$, measured from the current day and induced by the EV response to
prices. 
\jr{The DSO objective is to reduce peak demand while limiting deviations from the reference tariff.} Specifically, the objective combines a peak-load term $\max(l_{\mathrm{agg}})$, and a cost-deviation term $(p^\top l_{\mathrm{agg}}-p_{\mathrm{ref}}^\top l_{\mathrm{agg}})^2$ with $b$ as a weighting factor. The latter penalizes large deviations in total electricity cost relative to the reference tariff $p_{\mathrm{ref}}  \in \mathbb{R}^{96}$, thereby discouraging price designs that achieve peak reduction by imposing excessive additional cost on consumers.
The box constraints on $p$ enforce practical
price limits, while the bounds on $l_{\mathrm{agg}}$ ensure that the resulting aggregate load remains
within network capacity limits.
Although $l_{\mathrm{agg}}$ is not a direct decision variable of the DSO, it appears in the constraints because it is determined by the EV response to the price signal. Equality constraints are not included in the present formulation, but they can be added when needed to represent additional system requirements, such as exact power balance.

The price-responsive agent problem is defined by Eq.~\eqref{eq: ofo_formulation5}--\eqref{eq: ofo_formulation7}.
For each EV, the parameter is the price vector $p$ for the current day, and the decision variable is the optimal charging
profile $l_i$. Each EV is characterized by an energy demand, charging-rate limits, and an
availability window. 
\jr{The agent objective in Eq.~\eqref{eq: ofo_formulation5} minimizes the charging cost together with a quadratic regularization term. This term guarantees a unique optimal charging profile from a theoretical perspective. In practice, it discourages excessively concentrated charging power and can represent grid-side requirements for smoother demand profiles.}
The
equality constraint (Eq.~\eqref{eq: ofo_formulation6}) ensures that the required charging energy is delivered, while the inequality box
constraint (Eq.~\eqref{eq: ofo_formulation7}) enforces safe charging limits. Since the objective and constraints are convex and
separable across EVs, the joint optimization of all EVs is equivalent to solving the individual EV
charging problems independently.

\section{Methodology}
\label{sec:method}

Online Feedback Optimization (OFO) is a control strategy that steers dynamic systems toward their optimal steady-state operating points using real-time feedback~\cite{hauswirth_optimization_2024}. Instead of relying on detailed models or forecasts, OFO iteratively adjusts control inputs based on measured system outputs, enabling adaptation to disturbances and uncertainties in real time. Prior studies have demonstrated that OFO can achieve convergence to optimal solutions while respecting system constraints, even under nonlinear and complex dynamics. Various implementations exist, ranging from standard to data-driven and fully model-free formulations~\cite{hauswirth_optimization_2024,simonetto2020time}. 
\jr{Consider a general unconstrained optimization problem:
\begin{align}
    \min_{u,y} \quad & \Phi(y) \nonumber \\
    \text{subject to} \quad & y = h(u) + d, \nonumber
\end{align}
where $u$ denotes the input, $y$ denotes the output, $d$ is an additive disturbance, and $h$ represents the system input--output mapping. In standard OFO, the input is updated iteratively as
\begin{equation}
    u_{k+1}
    = u_k - \eta \nabla h(u_k)^\top \nabla \Phi(h(u_k)+d)^\top,
    \nonumber
\end{equation}
where $\eta$ is the step size, and $\nabla h(u_k)$ is the sensitivity matrix obtained from applying the chain rule to the composite objective $\Phi(h(u_k)+d)$. 
This formulation shows that OFO does not require a full plant model, but still relies on local sensitivity information that describes how input variations affect the measured output. In online EV pricing, this sensitivity is generally unknown and difficult to model because the aggregate load response depends on heterogeneous user behavior and individual charging optimizations. Moreover, practical pricing problems include operational constraints, such as price bounds and aggregate load limits, which are not captured by the unconstrained formulation above.

These challenges motivate the use of online sensitivity learning, a data-driven method which estimates the required sensitivity matrix directly from observed input--output data, together with linearized output projection, which efficiently enforces operational constraints through a locally linearized approximation of the feasible output set. This implementation requires no prior model information and efficiently enforces operational constraints, making it well suited for online EV pricing design.
}

\subsection{OFO Framework}
\subsubsection{Online Sensitivity Learning}
\label{sec:osl}

In this work, we adopted Online Sensitivity Learning (OSL) algorithm~\cite{doming_online_2023} to estimate the sensitivity matrix in real time using system measurements. However, the sensitivity matrix learned from OSL is also an approximation since it relies on a linear representation between price and aggregate load. 

Under the day-ahead pricing scheme, the entire EV fleet is optimized daily,  with $p_k \in \mathbb{R}^n$ representing the price vector for day $k$ and $l_{\text{agg},k} \in \mathbb{R}^n$ denoting the aggregate load for day $k$. Let $\Delta p_k = p_{k+1} - p_k$ and $\Delta l_k = l_{\text{agg},k+1} - l_{\text{agg},k}$. OSL operates based on the following \jr{local approximation of the aggregate price-response relation}:
\begin{equation}
    \Delta l_k \approx \nabla h(p_k) \Delta p_k,
\end{equation}
where $\nabla h(p_k) \in \mathbb{R}^{n \times n}$ is the sensitivity matrix at day $k$. 
Let $s_k = \text{vec}(\nabla h(p_k)) \in \mathbb{R}^{n^2}$ and $P_{\Delta,k} = \Delta p_k^T \otimes I \in \mathbb{R}^{n \times n^2}$, with vec and $\otimes$ denoting the column-wise vector representation and the Kronecker product, respectively. The dynamic equations of the sensitivity $s_k$ and the measurement equation are given by:
\begin{subequations}
    \begin{align}
        & s_{k+1} = s_k + w_{p, k}, \label{eq:process_noise}\\
        & \Delta l_k = P_{\Delta,k}s_k + w_{m, k}, \label{eq:measure_noise}
    \end{align}
    \label{eq:learn_h}
\end{subequations}%
\hspace{-0.5em} where $w_{p, k}\in \mathbb{R}^{n^2}$ and $w_{m,k}\in \mathbb{R}^n$ are the process and measurement error, respectively. As in \cite{picallo_adaptive_2022a}, $w_{p, k}$ and $w_{m,k}$ are modeled by Gaussian noises with covariance matrices $\Sigma_{p, k}\in \mathbb{R}^{n^2 \times n^2}$ and $\Sigma_{m, k}\in \mathbb{R}^{n \times n}$ respectively. Then learning the sensitivity matrix as shown in Eq.~\eqref{eq:learn_h} can be interpreted as a classical state estimation problem. A common approach is to use recursive least squares methods, which efficiently update the sensitivity estimate as new data becomes available. In this work, we employ the strategy proposed by \citeauthor{picallo_adaptive_2022a}, which leverages a Kalman Filter formulation to improve robustness to noise and adapt to possible time-varying system dynamics~\cite{picallo_adaptive_2022a}. The update procedure is shown in Eq.~\eqref{eq:osl}.
\begin{subequations}
    \begin{align}
        \hat{s}_{k+1} &= \hat{s}_k + K_k(\Delta l_k - P_{\Delta, k}\hat{s}_k), \\
        \Sigma_{k+1}& = (I - K_kP_{\Delta ,k})\Sigma_k + \Sigma_{p, k}, \\
        K_k &= \Sigma_kP_{\Delta,k}^T(\Sigma_{m, k} + P_{\Delta ,k}\Sigma_kP_{\Delta, k}^T).
    \end{align}
    \label{eq:osl}
\end{subequations}
\hspace{-0.5em}In this formulation, $K_k \in \mathbb{R}^{n^2 \times n}$ denotes the Kalman gain that balances trust between new measurements and prior estimates, while $\Sigma_k\in \mathbb{R}^{n^2 \times n^2}$ represents the covariance matrix capturing the uncertainty in the sensitivity estimate. 
After the update in each iteration, $\hat{s}_k$ is converted back to the matrix form $\hat{\nabla}h(p_k)$ and plugged into the updated of the controller as detailed in Section~\ref{sec:lop}.

\subsubsection{Linearized Output Projection}
\label{sec:lop}

The unconstrained OFO gradient descent step in the setting of this work is given by
$$p_{k+1} = p_k - \eta \nabla_p\Phi_{\text{DSO}}(p_k, l_{\text{agg},k}).$$
However, when output constraints $p \in \mathcal{P}$ must be enforced, this update risks violating feasibility. Linearized Output Projection (LOP) addresses this by projecting the unconstrained step onto a local approximation of the feasible set~\cite{hauswirth_optimization_2024}. Specifically, at each iteration $k$, i.e. at day $k$, a discrete-time controller as shown in Eq.~\eqref{eq:lop_controller} has been considered.
\begin{align}
    p_{k+1}=\Pi_{\mathcal{P}} [p_k+\alpha (\Sigma_{\alpha}(p_k, l_{\text{agg,k}}) + \omega_{u, k})],
    \label{eq:lop_controller}
\end{align}
where $\mathcal{P}$ is a box constraint as shown in Eq.~\eqref{eq: mathcalP}, $\alpha > 0$ is a step-size, $\omega_{u, k} \sim \mathcal{N}(0, \sigma_u^2 \mathbb{1})$ is a persistent excitation and $\Sigma_{\alpha}(p, l_{\text{agg}}) \in \mathbb{R}^n$ is set as the solution to 
\begin{subequations}
    \begin{align}
        \underset{w \in \mathbb{R}^n}{\text{minimize}} \quad &\| w+H(p)\nabla\Phi_{\text{DSO}}(p, l_{\text{agg}})^T \|^2 \\
        \text{subject to} \quad &  p_{\text{min}} \leq p+\alpha w \leq p_{\text{max}}, \label{eq: mathcalP} \\
                                & 0 \leq l_{\text{agg}}+\alpha \nabla h(p)w \leq l_{\text{max}},
    \end{align}
    \label{eq:lop_equation}
\end{subequations}
\hspace{-0.5em}where $H(p) = [\mathbb{I}_n \quad \nabla h(p)^T]$. For unconstrained problems, the ideal gradient would be $w_{\text{ideal}} = - H\nabla\Phi_{\text{DSO}}^T$. Therefore, in this constrained situation, the goal is to minimize the difference between the constrained gradient $w$ and the ideal gradient $w_{\text{ideal}}$, i.e. to minimize $\| w - w_{\text{ideal}}\|=\| w+H\nabla\Phi_{\text{DSO}}^T \|$, while ensuring $p$ and $l$ remain within their respective feasible sets.

\subsubsection{OFO for Electrical Distribution Systems Price Design}

The complete control loop is summarized in Algorithm~\ref{alg:ofo}. At each time step, the system updates its internal sensitivity estimate $H_k$ and computes the next price signal $p_{k+1}$ based on the newly observed aggregate load $l_{\text{agg}, k+1}$. The algorithm is initialized with two initial price-load pairs, for $\Delta p_0$ and $\Delta l_0$, an initial sensitivity estimate $\nabla H_0$, and an initial covariance matrix $\Sigma_0$ for the first update of the sensitivity matrix. During the optimization loop, each iteration consists of two sequential steps: (i) updating the sensitivity matrix $H_{k+1}$ using the latest measurement via OSL as mentioned in Section~\ref{sec:osl}, and (ii) computing a new feasible price $p_{k+1}$ using LOP as mentioned in Section~\ref{sec:lop}. 
The loop continues until a predefined stopping criterion is met. In practice, this can be based on a fixed time horizon like $K$ days in the algorithm, a convergence condition on price or load changes (e.g., $\|p_{k+2} - p_{k+1}\| < \epsilon$), or system-level performance indicators (e.g., cost reduction, tracking error). This flexibility allows the algorithm to be tailored to operational or regulatory requirements. The overall scheme is adaptive, model-free, and supports constraint-aware price design in online settings under limited data access.

\begin{algorithm}[ht]
\footnotesize
\caption{Online Feedback Optimization for Electricity Price Design}
	\begin{algorithmic}[1]
    \Statex \textbf{Initialization (Random Input Required):} 
    \State Provide initial prices ($p \in \mathbb{R}^{n}$): $p_0$ and $p_1$, typically derived from existing price data.
    \State Collect measurements ($l_{\text{agg}} \in \mathbb{R}^{n}$): $l_{\text{agg}, 0}$ and $l_{\text{agg}, 1}$.
    \State Initialize sensitivity matrix $H_0 \in \mathbb{R}^{n \times n}$ and its covariance $\Sigma_0 \in \mathbb{R}^{n^2 \times n^2}$.
    \Statex \rule{\linewidth}{0.4pt}
    \Statex \textbf{Online Price Design}
\For{$k\in[0, K]$}
      
     \State\label{alg_line:osl} 
     Update sensitivity matrix $H$ with $\Delta p_k = p_{k+1} - p_k$, $\Delta l_k = l_{\text{agg},k+1} - l_{\text{agg},k}$, $P_{\Delta,k} = \Delta p_k^T \otimes I$ and
     \[
    \begin{aligned}
        &\hat{s}_k = [\, H_k(1,1), H_k(2,1), \dots, H_k(m,1), H_k(1,2), H_k(2,2), \dots \,]^T, \\
        &K_k = \Sigma_kP_{\Delta,k}^T(\Sigma_{m, k} + P_{\Delta ,k}\Sigma_kP_{\Delta, k}^T),\\
        &\Sigma_{k+1} = (I - K_kP_{\Delta ,k})\Sigma_k + \Sigma_{p, k} ,\\
        &\hat{s}_{k+1} = \hat{s}_k + K_k(\Delta l_k - P_{\Delta, k}\hat{s}_k) ,\\
        &H_{k+1} = \text{Reshape}(\hat{s}_{k+1}, \text{shape}(H_k)).
    \end{aligned}
     \]

     \State \label{alg_lin:excit}
     Sample the excitation noise $\omega_{u, k+1} \sim \mathcal{N}(0, \sigma_u^2 \mathbb{1})$.

     \State \label{alg_lin:lop}
     Update price $p$ with 
     \[
     \begin{aligned}
        &p_{k+2} = \Pi_{\mathcal{P}} [p_{k+1} + \alpha \Sigma_{\alpha}(p_{k+1}, l_{\text{agg,k+1}})+\omega_{u, k+1}],\\
    \text{where} \quad &\\
      & \Sigma_{\alpha}(p_{k+1}, l_{\text{agg,k+1}}) \\
        & = \arg \underset{w \in \mathbb{R}^n}{\min} \| w+(p_{k+1})\left[\mathbb{I}^n \text{ } H_{k+1}^T\right]\nabla\Phi_{\text{DSO}}(p_{k+1}, l_{\text{agg}, k+1})^T \|^2 \\
         \text{subject to}\quad &   p_{\text{min}} \leq p_{k+1}+\alpha w \leq p_{\text{max}},  \\
                                &   0 \leq l_{\text{agg},k+1}+\alpha H_{k+1}w \leq l_{\text{max}}.
    \end{aligned}
     \]
      
    \EndFor
	\end{algorithmic}
\label{alg:ofo}
\end{algorithm}

\subsection{Benchmark Method}

Existing state-of-the-art methods usually require access to individual user-level data and are therefore not applicable under the limited data-access setting considered in this work. To assess the best achievable performance under full information, we temporarily relax this limitation and construct a benchmark that approximates the minimum attainable peak demand. Specifically, the interaction between the DSO and end users is modeled as a Stackelberg game, in which the DSO acts as the leader by setting prices while anticipating the users’ optimal load responses. This yields a bilevel optimization problem, which is reformulated as a Mathematical Program with Equilibrium Constraints (MPEC) by embedding the users’ Karush–Kuhn–Tucker (KKT) conditions into the DSO optimization problem. This benchmark is then used to evaluate how closely the proposed OFO method approaches the best achievable performance under limited observability.

Mathematically, the Stackelberg game can be formulated as a bi-level optimization problem, as shown in Eq.~\eqref{eq: game_formulation}. All variables, objectives, and constraints are consistent with those in Eq.~\eqref{eq: ofo_formulation}, with one key distinction: the upper-level agent, i.e. the DSO, has full knowledge of the individual distributed loads $l_i$.

\begin{figure}[ht]
\centering
\begin{subequations}
\begin{align}
\text{Upper level} \quad \mathop{\min}_{p\in\mathbb{R}^{96}} \quad & \max( \mathop{\sum}_{i=0}^{i=M}l_i) + b(p^T \mathop{\sum}_{i=0}^{i=M}l_i - p_{\text{ref}}^T \mathop{\sum}_{i=0}^{i=M}l_i)^2 \\
\text{subject to} \quad &  p_{\text{min}} \leq p \leq p_{\text{max}} , \\
& 0 \leq  \mathop{\sum}_{i=0}^{i=M}l_i \leq l_{\text{max}} , \\
 \text{Lower level} \quad &l_i = \arg \mathop{\min}_{l_i'\in\mathbb{R}^{96}} \mathop{\sum_{i=0}^M}p^Tl_i' + 0.01\mathop{\sum_{i=0}^M}l_i'^Tl_i', \\
 \text{subject to}& \qquad  c_i^T l_i'=d_i, \quad i = 0, \ldots,M , \\
& \qquad 0 \leq l_i' \leq l_{\text{max}, i} , \quad i = 0, \ldots,M.
\end{align}
\label{eq: game_formulation}
\end{subequations}
\end{figure}

A standard method to solve this bi-level problem is to substitute the lower level problem with its KKT condition~\cite{ye_investigating_2018} and then the original bi-level problem is turned into an MPEC as shown in Eq.~\eqref{eq: MPEC_bigM}. 

\begin{figure}[ht]
\centering
\begin{subequations}
\begin{align}
\text{MPEC:} \quad \mathop{\min}_{p\in\mathbb{R}^{96}, l_i\in\mathbb{R}^{96}} \quad & \max\Big( \mathop{\sum}_{i=0}^{i=M}l_i\Big) + b\Big|p^T \mathop{\sum}_{i=0}^{i=M}l_i-p_\text{ref}^T \mathop{\sum}_{i=0}^{i=M}l_i\Big|  \\
\text{subject to} \quad &  p_{\text{min}} \leq p \leq p_{\text{max}} , \\
& 0 \leq  \mathop{\sum}_{i=0}^{i=M}l_i \leq l_{\text{max}} , \\
 &  p+\lambda_ic_i^T +0.02l_i- \mu_i+\gamma_i=0, \\
 & \qquad \qquad \quad \forall i \in [M], \nonumber \\
&  c_i^Tl_i=d_i, \quad \forall i \in [M], \\
&  0 \leq l_i \leq l_{\text{max}, i}, \quad \forall i \in [M] , \\
&  \mu_i \geq 0, \gamma_i \geq 0 \quad \forall i \in [M], \\
&  \mu_{i, j} \leq Bz_{i, j}, \quad \forall i \in [M], \forall j \in [n] , \\
& l_{i, j} \leq B(1-z_{i, j}), \quad \forall i \in [M], \forall j \in [n] ,\\
&  z_{i, j} \in \{0, 1\}, \quad \forall i \in [M], \forall j \in [n], \\
&  \gamma_{i, j} \leq Bx_{i, j}, \quad \forall i \in [M], \forall j \in [n] ,\\
&  -(l_{i, j}-l_{max, i, j}) \leq B(1-x_{i, j}),\\
& \qquad \qquad \quad \forall i \in [M], \forall j \in [n], \nonumber \\
&  x_{i, j} \in \{0, 1\}, \quad \forall i \in [M], \forall j \in [n] .
\end{align}
\label{eq: MPEC_bigM}
\end{subequations}
\end{figure}

Here, $\lambda_i$, $\mu_i$, and $\gamma_i$ are the dual variables associated with the constraints
$c_i^\top l_i=d_i$, $0 \leq l_i$, and $l_i \leq l_{\max,i}$, respectively. Since each lower-level
problem is strongly convex and satisfies Slater’s condition, the KKT conditions are necessary and
sufficient for optimality. However, the complementary slackness conditions introduce non-convex
bilinear terms through products of primal and dual variables.

\jr{To obtain a tractable single-level reformulation, we apply the Big-$M$ method~\cite{jose_representation_1981} and let $B$ denote a sufficiently large constant.} The complementarity conditions are then replaced by
binary variables and Big-$M$ constraints, so that either the dual variable or the corresponding
primal constraint is active. This avoids the original bilinear complementarity terms and yields the
MPEC formulation in Eq.~\eqref{eq: MPEC_bigM}. When valid bounds are available for the primal and
dual variables, and the upper-level problem is otherwise unchanged, this reformulation is equivalent
to Eq.~\eqref{eq: game_formulation}~\cite{jose_representation_1981}.

Although the KKT-based reformulation removes the complementarity products, the resulting single-level
problem still contains nonlinear terms in the DSO objective arising from the interaction between
price and load. In particular, the original cost-deviation penalty involves the square of the total
cost difference, which becomes highly nonlinear when both price and load are decision variables. In
principle, such terms could be approximated using McCormick envelope relaxations~\cite{mitsos_mccormick_2009}.
However, in our setting, the feasible load range is wide, which makes accurate approximation
difficult even with multiple partitions.
To improve numerical tractability, we replace this nonlinear cost penalty by a linear absolute-value
penalty in the MPEC objective. This preserves the role of the term as a penalty on cost deviation,
while yielding a formulation that can be handled more efficiently by standard optimization solvers.
The coefficient $b$ is kept consistent with the OFO formulation to maintain comparable weighting
between peak reduction and cost deviation. 

Importantly, this modification has limited effect on the
practical solution because the cost-deviation term remains a secondary objective, while peak
reduction remains the dominant driver of the pricing signal. Since the feasible set and operational
constraints are unchanged, the resulting optimal price profiles and aggregate load trajectories show
only negligible differences in our numerical experiments. Commercial solvers such as GUROBI~\cite{gurobi}
can then solve the resulting formulation efficiently.

\subsection{Theoretical Proof of Convergence}
\label{sec:proof_converge}

Here we analyze the convergence properties of the proposed OFO-based price design approach. To enable the analysis, we first introduce a smooth reformulation of the original problem, which provides the regularity required for stochastic approximation arguments. We then show that, under standard assumptions, the online sensitivity learning component is asymptotically unbiased with bounded covariance, and use this result to establish that the projected OFO price updates converge to the stationary set of the smoothened problem. This demonstrates that the overall method asymptotically attains the desired constrained stationary behavior under limited feedback.

\subsubsection{A Smooth Reformulation of the Original Problem}

To facilitate the theoretical analysis, a smoothened version of OFO is first introduced as shown in
Eq.~\eqref{eq: smooth_formulation}. \jr{This formulation replaces non-smooth objective terms with smooth approximations and incorporates hard constraints into the objective as smooth penalty terms, enabling the use of standard tools from stochastic
approximation and convex analysis.} Specifically, in the objective function
Eq.~\eqref{eq: 1ofo_formulation1}, the $\max(\cdot)$ operator is replaced by the log-sum-exp
smoothing $\mathrm{LSE}_\tau(\cdot)$, i.e.,
$\mathrm{LSE}_\tau(z)=\tau\log\!\big(\sum_i \exp(z_i/\tau)\big)$, which is a classical smooth
approximation of $\max_i z_i$ with a controlled approximation gap and a Lipschitz-continuous
gradient for fixed $\tau>0$ \cite{boyd2004convex}. This
Lipschitz-gradient property is essential for establishing convergence of the OFO price updates
under projected stochastic approximation.

The second modification concerns the EV charging problem. The original equality and inequality
constraints are incorporated into the lower-level objective via smooth penalty terms, which yields
a continuously differentiable lower-level problem. This differentiability is critical for
guaranteeing convergence of the online sensitivity learning process based on gradient-type
updates. Such smooth penalty reformulations are standard in constrained optimization for enabling
differentiable surrogates and gradient-based analysis \cite{nocedal2006numerical}.

\textbf{All subsequent convergence analysis is carried out based on this smoothened
formulation}.

\begin{figure}[ht]
\begin{subequations}
    \centering
\begin{align}
\text{Smooth OFO:}& \nonumber \\
\mathop{\min}_{p\in\mathbb{R}^{96}} &  \quad\text{LSE}(l_\text{agg}) + b(p^Tl_{\text{agg}} - p_{\text{ref}}^Tl_{\text{agg}})^2 \label{eq: 1ofo_formulation1} \\
\text{subject to} \quad &  p_{\text{min}} \leq p \leq p_{\text{max}} ,\label{eq: 1ofo_formulation2} \\
& 0 \leq l_{\text{agg}} \leq l_{\text{max}}, \label{eq: 1ofo_formulation3}  \\
& l_{\text{agg}} = \mathop{\sum}_{i=0}^{i=M}l_i , \label{eq: 1agg}\\
& l_i = \arg \min \sum_{i=0}^{M} (
p^\top l_i
+ 0.01\, l_i^Tl_i \label{eq: 1ofo_formulation7}\\
& \quad+ \rho_{\mathrm{eq}}\, ( c_i^\top l_i - d_i)^2
)+ \rho_{\mathrm{ineq}}
\sum_{i=0}^{M} (
\|\max(0,-l_i)\|_2^2 \nonumber  \\
& \quad+ \|\max(0,\,l_i - l_{\max,i})\|_2^2
). \nonumber
\end{align}
\label{eq: smooth_formulation}
\end{subequations}
\end{figure}

A corresponding smoothened MPEC can be derived in the same manner as the non-smooth MPEC introduced earlier, and for brevity, its explicit formulation is omitted. We refer to these formulations as \textit{smooth OFO} and \textit{smooth MPEC}, respectively, and our goal here is to establish that the smooth OFO algorithm converges to the solution set of the smooth MPEC.

\subsubsection{Convergence of OFO Results}
\label{sec: OFO_converge}

Let $\mathcal P \subset \mathbb{R}^m$ denote the compact and convex feasible set for prices $p$
(corresponding to Eq.~\eqref{eq: 1ofo_formulation2}). For each $p \in \mathcal P$, consider the
agent-level charging problem with QP-penalty objective $\Phi_{\mathrm{Agent}}(p,l)$ (corresponding to Eq.~\eqref{eq: 1ofo_formulation7}). Due to the
strong convexity of $\Phi_{\mathrm{Agent}}(p,l)$ in the smoothened OFO formulation, the lower-level
problem admits a unique optimizer, denoted by
\[
l^\star(p) := \arg\min_{l \in \mathcal \mathbb{R}^m} \Phi_{\mathrm{Agent}}(p,l),
\]

Consequently, the smooth MPEC is equivalent to the reduced single-level problem
\begin{subequations}
\begin{align}
\min_{p \in \mathcal{P}} \quad & \Phi_{\mathrm{DSO}}(p) := F\bigl(p, l^\star(p)\bigr), \\
\text{s.t.}\quad & l^\star(p) = \arg\min_{l \in \mathbb{R}^m}\Phi_{\mathrm{Agent}}(p, l),
\end{align}
\label{eq:reduced_problem}
\end{subequations}
and any solution of Eq.~\eqref{eq:reduced_problem} corresponds to a solution of the smooth MPEC.
The optimality condition for the smooth MPEC (equivalently, for the reduced problem) is
\begin{equation}
0 \in \nabla \Phi_{\mathrm{DSO}}(p^\star) + N_{\mathcal P}(p^\star),
\label{eq:smooth_MPEC_result}
\end{equation}
where $N_{\mathcal P}(p^\star)$ denotes the normal cone to $\mathcal P$ at $p^\star$.

The smoothened OFO price update
can be written as the projected stochastic recursion
\begin{equation}
\label{eq:ofo_projected}
p_{t+1}
=
\Pi_{\mathcal P}\!\left(
p_t-\alpha_t\,\widehat g_t
\right),
\end{equation}
where $\widehat g_t$ is the estimated gradient of the reduced DSO objective with respect to the
price vector. We now show that the smoothened OFO iterates converge to the same constrained stationary set.

\begin{lemma}[Online sensitivity learning convergence~\cite{picallo_adaptive_2022}]
\label{lem:osl_convergence}
\leavevmode\\
Under the lower-level optimization in Eq.~\eqref{eq: 1ofo_formulation7}, with the process and measurement noise models in Eqs.~\eqref{eq:process_noise}--\eqref{eq:measure_noise}, the feasible set defined in Eqs.~\eqref{eq: 1ofo_formulation2}--\eqref{eq: 1ofo_formulation3}, and the persistent excitation condition in Eq.~\eqref{eq:lop_controller}, the associated online sensitivity estimate is asymptotically unbiased and has bounded covariance. Specifically,
\[
\bigl\|\mathbb{E}[h_t-\hat h_t]\bigr\|_2^2 \le C_{h,1}e^{-C_{h,2}t}\xrightarrow[t\to\infty]{}0,
\]
and
\[
\mathbb{E}\!\left[\|h_t-\hat h_t\|_2^2\right]
=\operatorname{tr}(\Sigma_t)
\le C_{h,3}+C_{h,4}e^{-C_{h,5}t}\xrightarrow[t\to\infty]{}C_{h,3},
\]
for some constants $C_{h,i}>0$, where $\Sigma_t$ denotes the estimation error covariance.
\end{lemma}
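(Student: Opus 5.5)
The plan is to reduce the statement to the standard stability theory of the Kalman filter for the linear time-varying state-space model \eqref{eq:process_noise}--\eqref{eq:measure_noise}, following \cite{picallo_adaptive_2022}. I will write $h_t=s_t$ for the true (vectorized) sensitivity and $\hat h_t=\hat s_t$ for the OSL estimate, and define the error $e_t=h_t-\hat h_t$. Subtracting the OSL update \eqref{eq:osl} from the random-walk model gives the error recursion
\[
e_{t+1}=(I-K_tP_{\Delta,t})e_t + w_{p,t}-K_t w_{m,t}.
\]

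\textbf{Step 1: well-posedness of the model.} Under the smooth formulation \eqref{eq: 1ofo_formulation7}, each agent objective is strongly convex and $C^1$ with a Lipschitz gradient (the penalty terms $\|\max(0,\cdot)\|^2$ are $C^{1,1}$). Hence $l^\star(p)$ is unique and Lipschitz in $p$. On the compact set $\mathcal P$ this implies that the increments $\Delta l_t$ and the local linear model are bounded. The linearization residual is then absorbed into $w_{m,t}$ and $w_{p,t}$ with bounded second moments, consistent with the Gaussian modelling assumption. Boundedness of $\mathcal P$ also gives $\|P_{\Delta,t}\|\le \bar c$ uniformly.

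\textbf{Step 2: uniform observability from persistent excitation.} The term $\omega_{u,t}\sim\mathcal N(0,\sigma_u^2\mathbb 1)$ in \eqref{eq:lop_controller} makes $\Delta p_t$ carry a nondegenerate Gaussian component. Near the boundary of $\mathcal P$ the projection may truncate this component, but a positive-probability cone of directions still survives. I will show that for some window $N\ge n$ there is $\underline c>0$ with
\[
\sum_{j=t}^{t+N}P_{\Delta,j}^\top\Sigma_{m,j}^{-1}P_{\Delta,j}=\Bigl(\sum_{j}\Delta p_j\Delta p_j^\top\Bigr)\otimes\Sigma_m^{-1}\succeq \underline c I
\]
(in expectation or with high probability). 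Together with $\Sigma_{p,t}\succ0$, which gives uniform controllability, this yields uniform complete observability and controllability of the pair.

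\textbf{Step 3: Kalman filter stability.} The classical Jazwinski/Anderson--Moore results then give uniform bounds $\underline\sigma I\preceq\Sigma_t\preceq\bar\sigma I$. They also show that the transition matrix of $I-K_tP_{\Delta,t}$ is exponentially stable: $\|\Psi(t,s)\|\le c\,\lambda^{t-s}$ with $\lambda<1$. Taking expectations in the error recursion, the zero-mean noise drops out, so $\mathbb E[e_t]=\Psi(t,0)\mathbb E[e_0]$. This gives the first bound with $C_{h,1}=c^2\|\mathbb E e_0\|^2$ and $C_{h,2}=-2\log\lambda$. For the second moment, $\mathbb E[e_te_t^\top]$ coincides with the Riccati iterate $\Sigma_t$, so $\mathbb E\|e_t\|^2=\operatorname{tr}(\Sigma_t)$. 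Exponential convergence of the Riccati recursion to its bounded periodic or limiting regime splits this into a steady part $C_{h,3}$, of order $\bar\sigma$ and driven by $\Sigma_p$, plus a transient $C_{h,4}e^{-C_{h,5}t}$ coming from $\Sigma_0$.

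\textbf{Main obstacle.} The hard step is Step 2. The regressor $\Delta p_t$ depends on past estimates through the LOP step and the projection, so the excitation is neither exogenous nor unconstrained. I would first try to lower-bound $\mathbb E[\Delta p_t\Delta p_t^\top\mid\mathcal F_{t}]\succeq c\,\sigma_u^2 I$ using the fact that the Gaussian excitation has a density and that the projection onto a box is coordinatewise. If this conditional bound fails near the faces of the box, I would instead assume that iterates stay a fixed distance inside $\mathcal P$ often enough, which is the condition implicitly used in \cite{picallo_adaptive_2022}.
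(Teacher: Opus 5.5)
Your plan has the same shape as the paper's proof. That proof only checks the hypotheses of \cite{picallo_adaptive_2022}: strong convexity and Lipschitz gradients of the smoothed agent problem, compactness and bounded Jacobians, a uniform lower bound on the \emph{regressor covariance} from persistent excitation, and an upper bound from bounded variations. It then cites that result. You unpack the citation into deterministic Kalman-filter stability theory, and that unpacking has a genuine gap where Steps~2 and~3 meet.

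The Jazwinski/Anderson--Moore theorems need uniform complete observability along the realized regressor sequence, i.e.\ $\lambda_{\min}\bigl(\sum_{j=t}^{t+N}\Delta p_j\Delta p_j^\top\bigr)\ge\underline c$ for \emph{every} $t$, pathwise. Step~2 can only give a bound in conditional expectation or with high probability. The pathwise condition in fact fails almost surely:
\begin{itemize}
\item The LOP constraint forces $p_j+\alpha\Sigma_\alpha\in\mathcal P$, so $\alpha\Sigma_\alpha$ is bounded. Box projection is coordinatewise clipping toward $p_j\in\mathcal P$.
\item Hence each step has conditional probability at least some $q(\delta)>0$ that $|\Delta p_{j,1}|<\delta$. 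Every window of length $N+1$ is then poorly excited in direction $e_1$ with conditional probability at least $q^{N+1}$.
\item By L\'evy's conditional Borel--Cantelli lemma this happens infinitely often, for every $(N,\underline c)$. The same holds for arbitrarily long runs of such windows.
\item During such runs the Riccati recursion lets $\Sigma_t$ grow in the corresponding directions to order $1/\delta$, so $\sup_t\operatorname{tr}(\Sigma_t)=\infty$ almost surely.
\end{itemize}
So the pathwise bounds $\underline\sigma I\preceq\Sigma_t\preceq\bar\sigma I$ and $\|\Psi(t,s)\|\le c\lambda^{t-s}$ claimed in Step~3 are false. The paper's sketch avoids committing to a pathwise statement; yours does not. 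Your fallback assumption (iterates staying inside $\mathcal P$) does not help, because the obstruction is the randomness of the excitation, not the projection.

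What you need instead is stability theory for Kalman-filter-based tracking with \emph{random} regressors under a conditional excitation (``conditional richness'') condition, in the spirit of L.~Guo. Such results give exponential stability in mean square, $\mathbb E\|\Psi(t,s)\|^2\le M\lambda^{t-s}$, and $\sup_t\mathbb E\operatorname{tr}(\Sigma_t)<\infty$.

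Your first Step-2 idea supplies exactly this input, with no extra assumption. Read $\sigma_u^2\mathbb 1$ as $\sigma_u^2 I$. Given $\mathcal F_t$, the coordinates of $p_t+\alpha(\Sigma_\alpha+\omega_{u,t})$ are independent Gaussians with bounded means. The box projection clips them coordinatewise, and a clipped Gaussian has variance bounded below uniformly over bounded means. Hence $\mathbb E[\Delta p_t\Delta p_t^\top\mid\mathcal F_t]\succeq cI$.

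The expectation step of Step~3 must also be redone. In closed loop, $K_t$, $P_{\Delta,t}$ and $\Psi$ are random, and $\Psi(t,j+1)$ depends on $w_{m,j}$ through later prices. So the noise does not simply drop out, and $\mathbb E[e_te_t^\top]\neq\Sigma_t$ in general. The fix is as follows:
\begin{itemize}
\item $P_{\Delta,t}$ is predictable, so the filter is conditionally Gaussian. Assume, as the statement implicitly does, that $\Sigma_0$ and the noise covariances are the true ones.
\item Let $\delta_0$ be the error in the prior mean. Then $e_t=\tilde e_t-\Psi(t,0)\delta_0$, with $\mathbb E[\tilde e_t\mid\mathcal F_t]=0$ and $\mathbb E[\tilde e_t\tilde e_t^\top\mid\mathcal F_t]=\Sigma_t$.
\item This gives $\|\mathbb E e_t\|\le\mathbb E\|\Psi(t,0)\|\,\|\delta_0\|$ and $\mathbb E\|e_t\|^2=\mathbb E\operatorname{tr}(\Sigma_t)+\mathbb E\|\Psi(t,0)\delta_0\|^2$.
\end{itemize}
These yield both bounds of the lemma, with $\operatorname{tr}(\Sigma_t)$ read as $\mathbb E\operatorname{tr}(\Sigma_t)$. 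That is also the natural reading of ``estimation error covariance'' in the statement.
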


\begin{proof}
For the smoothened EV charging problem considered here, the lower-level objective of the smoothened OFO is continuously
differentiable and strongly convex, with a Lipschitz-continuous gradient due to the quadratic
regularization and squared penalty terms. This implies that the associated operator is strongly
monotone and Lipschitz continuous in certain regions around nominal operating points~~\cite{picallo_adaptive_2022}. Since $p$ and $l$ are physically constrained, the iterates remain
in a compact set and the relevant gradients and Jacobians are uniformly bounded. In addition,
persistent excitation of the price signal ensures a uniform positive lower bound on the regressor
covariance matrices, while bounded variations of $p$ and $l$ imply a uniform upper bound. Hence, under these conditions, the estimate is asymptotically unbiased and has bounded covariance.
\end{proof}

\begin{assumption}[Summable step size]
\label{ass:stepsize}
\leavevmode\\
The step sizes $\{\alpha_t\}$ satisfy the Robbins--Monro conditions:
$\alpha_t>0$, $\sum_{t=0}^{\infty}\alpha_t=\infty$, and $\sum_{t=0}^{\infty}\alpha_t^2<\infty$.
\end{assumption}

The step-size condition in Assumption~\ref{ass:stepsize} is standard in stochastic approximation and commonly appears in related online optimization and feedback-based learning methods to ensure convergence despite noisy updates~\cite{Karandikar_convergence_2024}. A simple example is
\begin{equation}\label{eq:step_size_decay}
    \alpha_t=\frac{\alpha_0}{t+1}, \qquad \alpha_0>0.
\end{equation}
\jr{The above condition implies that the step size asymptotically converges to zero. This specification facilitates the subsequent theoretical analysis and characterization. In practice, however, practitioners may apply the controller over consecutive periods, where the step size in each period follows \eqref{eq:step_size_decay}, and the solution obtained in the last period is used to warm start the next. Under this implementation, the closed-loop guarantees presented in this subsection naturally extend to scenarios without a fixed termination time.
}

\begin{assumption}[Regularity of the reduced objective]
\label{ass:regularity_reduced}
\leavevmode\\
The reduced objective $\Phi_{\mathrm{DSO}}$ is continuously differentiable on $\mathcal P$, and
$\nabla \Phi_{\mathrm{DSO}}$ is Lipschitz continuous on $\mathcal P$.
The gradient oracle used by OFO satisfies
\[
\widehat{\nabla \Phi}(p_t)
=
\nabla \Phi(p_t) + \xi_t,
\]
where $\{\xi_t\}$ is an unbiased martingale-difference noise sequence with bounded variance.
\end{assumption}

The verification of Assumption~\ref{ass:regularity_reduced} follows from Lemma~\ref{lem:osl_convergence}. In particular, the lemma implies that the sensitivity estimation error has bounded second moment and is asymptotically unbiased. Since the OFO gradient surrogate is constructed from the sensitivity estimate through smooth mappings on a compact domain, these properties carry over to the induced gradient error $\xi_t$. For brevity, we defer the detailed derivation to the proof of Theorem~\ref{thm:ofo_convergence}. Note that the resulting unbiasedness holds asymptotically rather than at every iteration, which is typical in stochastic approximation analyses~\cite{kushner_stochastic_2013}.

\begin{theorem}[Convergence of smoothened OFO]
\label{thm:ofo_convergence}
\leavevmode\\
Suppose Assumptions~\ref{ass:stepsize} and~\ref{ass:regularity_reduced} hold, and the online
sensitivity learning conditions in Lemma~\ref{lem:osl_convergence} are satisfied. Then the projected
smoothened OFO iterates generated by Eq.~\eqref{eq:ofo_projected} converge almost surely to the constrained
stationary set of the reduced problem Eq.~\eqref{eq:reduced_problem}. Equivalently, every limit point
$p^\star$ satisfies
\[
0 \in \nabla \Phi_{\mathrm{DSO}}(p^\star) + N_{\mathcal P}(p^\star),
\]
which coincides with the stationary condition of the smooth MPEC in
Eq.~\eqref{eq:smooth_MPEC_result}.
\end{theorem}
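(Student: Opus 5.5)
The plan is to use the standard ODE method for projected stochastic approximation (Kushner--Yin) combined with a Lyapunov argument, with $\Phi_{\mathrm{DSO}}$ itself as the Lyapunov function. The first step is to rewrite the recursion in Eq.~\eqref{eq:ofo_projected} as
\[
p_{t+1}=\Pi_{\mathcal P}\bigl(p_t-\alpha_t(\nabla\Phi_{\mathrm{DSO}}(p_t)+\xi_t+\beta_t)\bigr).
\]
Here $\xi_t$ is a martingale-difference term with bounded conditional variance, as in Assumption~\ref{ass:regularity_reduced}. The term $\beta_t$ is a bias coming from the sensitivity error $h_t-\hat h_t$.

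The first task is to bound the bias $\beta_t$. The gradient surrogate has the form $\widehat g_t=\nabla_p F+\hat h_t^\top\nabla_l F$, evaluated at $(p_t,l_{\mathrm{agg},t})$. Both $p$ and $l^\star(p)$ range over compact sets, since $l^\star$ is continuous by strong convexity of $\Phi_{\mathrm{Agent}}$. Hence $\nabla_l F$ is uniformly bounded by some $G$, and we get $\|\mathbb E[\beta_t]\|\le G\,\|\mathbb E[h_t-\hat h_t]\|\le G\sqrt{C_{h,1}}e^{-C_{h,2}t/2}$ from Lemma~\ref{lem:osl_convergence}. The zero-mean part $(\hat h_t-\mathbb E\hat h_t)^\top\nabla_lF$ is absorbed into $\xi_t$, and its second moment is bounded by $G^2(C_{h,3}+C_{h,4})$. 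Consequently $\sum_t\alpha_t\|\mathbb E\beta_t\|<\infty$, because the bias decays exponentially and $\alpha_t$ is bounded.

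The second task is the noise and bias bookkeeping. The partial sums $\sum_t\alpha_t\xi_t$ form an $L^2$-bounded martingale, since $\sum\alpha_t^2<\infty$ by Assumption~\ref{ass:stepsize}. They therefore converge almost surely, which gives the Kushner--Clark asymptotic rate-of-change condition. For the bias, a summable conditional expectation of $\alpha_t\beta_t$ gives almost sure convergence of $\sum\alpha_t\beta_t$ via a Robbins--Siegmund type argument. Treating this bias term is the main obstacle. Lemma~\ref{lem:osl_convergence} only controls the unconditional mean, while the estimator $\hat h_t$ is correlated with the past iterates. I would handle this by arguing that $\hat h_t$ is $\mathcal F_{t-1}$-measurable and that only the fresh measurement noise is new at step $t$. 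One then splits $\beta_t$ into a predictable part, bounded in $L^1$ with summable weighted norms, and a martingale part with bounded variance. If this fails, the fallback is to strengthen the assumption to a conditional version of the unbiasedness in Lemma~\ref{lem:osl_convergence}, which the paper already allows since it notes that unbiasedness holds only asymptotically.

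The third step is the projected ODE limit. Since $\mathcal P$ is a compact convex box, the piecewise-linear interpolation of $\{p_t\}$ is an asymptotic pseudo-trajectory of the projected gradient flow
\[
\dot p=-\nabla\Phi_{\mathrm{DSO}}(p)-z,\qquad z\in N_{\mathcal P}(p).
\]
This follows from the Lipschitz continuity of $\nabla\Phi_{\mathrm{DSO}}$ in Assumption~\ref{ass:regularity_reduced}, together with the two summability facts above. Along this flow,
\[
\tfrac{d}{dt}\Phi_{\mathrm{DSO}}=-\|\Pi_{T_{\mathcal P}(p)}(-\nabla\Phi_{\mathrm{DSO}}(p))\|^2\le 0,
\]
with equality exactly on $\mathcal S=\{p:0\in\nabla\Phi_{\mathrm{DSO}}(p)+N_{\mathcal P}(p)\}$. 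So $\Phi_{\mathrm{DSO}}$ is a strict Lyapunov function for $\mathcal S$. To finish, one needs $\Phi_{\mathrm{DSO}}(\mathcal S)$ to have empty interior, or one can work with the limit-set characterization directly. Under that condition, the Kushner--Yin theorem (or Benaïm's theorem for internally chain transitive sets) gives that almost surely every limit point $p^\star$ lies in $\mathcal S$. That is precisely Eq.~\eqref{eq:smooth_MPEC_result}, which gives the claimed equivalence with smooth MPEC stationarity.
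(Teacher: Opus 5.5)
Your route is the paper's route, worked out in more detail. Both use projected stochastic approximation, with the oracle error $(\hat h_t-h_t)^\top q_t$, $q_t:=\nabla_l F(p_t,l_{\mathrm{agg},t})$, induced by the sensitivity error and controlled through Lemma~\ref{lem:osl_convergence}. Your Lyapunov step is in fact more careful than the paper's, which never mentions the empty-interior condition. The genuine gap is the bias step you flag, and it does not close as you propose, for three reasons.

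First, the bound $\|\mathbb{E}[\beta_t]\|\le G\,\|\mathbb{E}[h_t-\hat h_t]\|$ is invalid. Here $q_t$ is random and depends on the same past as $\hat h_t$, so it cannot be pulled out of the expectation. Jensen only gives $G\,\mathbb{E}\|\hat h_t-h_t\|\le G\sqrt{\operatorname{tr}(\Sigma_t)}$, a bound that tends to $G\sqrt{C_{h,3}}$ and does not decay.

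Second, even a summable $\|\mathbb{E}[\beta_t]\|$ would be the wrong object. The Kushner--Yin theorems need (at least) the \emph{conditional} bias $\beta_t=\mathbb{E}[\widehat g_t\mid\mathcal F_t]-\nabla\Phi_{\mathrm{DSO}}(p_t)$ to tend to zero almost surely. A Robbins--Siegmund argument needs $\sum_t\alpha_t\|\beta_t\|<\infty$ almost surely. The norm of an unconditional mean controls neither.

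Third, your split into a predictable part plus a martingale part has nothing to split. $\hat h_t$ depends only on data up to day $t$, and $h_t=\nabla l^\star(p_t)$ depends only on $p_t$. So, relative to the $\sigma$-field $\mathcal F_t$ generated by the history up to $p_t$, the only new randomness in $(\hat h_t-h_t)^\top q_t$ is the day-$t$ measurement noise. The term $(\hat h_t-\mathbb{E}\hat h_t)^\top q_t$ has zero unconditional mean but is not a martingale difference. The predictable part is essentially the full error, whose second moment Lemma~\ref{lem:osl_convergence} only bounds by $C_{h,3}+C_{h,4}e^{-C_{h,5}t}$. That floor is genuine: by the lemma's own identity $\mathbb{E}\|h_t-\hat h_t\|^2=\operatorname{tr}(\Sigma_t)$, and Eq.~\eqref{eq:osl} re-injects the process-noise covariance at every step. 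With a conditional bias that does not vanish, the ODE argument in general only yields convergence to a neighbourhood of $\mathcal S$.

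Your fallback is therefore not a mild strengthening of Lemma~\ref{lem:osl_convergence}. For an essentially predictable error, conditional unbiasedness means the error itself vanishes, i.e.\ almost-sure consistency $\hat h_t-\nabla l^\star(p_t)\to0$ along the closed loop. The lemma cannot supply this, because its floor $C_{h,3}>0$ is built into the filter. Note also that under Assumption~\ref{ass:stepsize} the price increments are $O(\alpha_t)$, which sits uneasily with the uniform excitation the lemma relies on.

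The paper's proof has exactly the same hole. It declares $\xi_t$ asymptotically unbiased because it comes from the sensitivity error ``through a bounded linear transformation'', which is the same illegitimate pull-out of the random $q_t$. It then appeals to asymptotically negligible bias.

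The consistent way to obtain the theorem as stated is to use Assumption~\ref{ass:regularity_reduced} literally as a hypothesis on the oracle: the total gradient error is an unbiased martingale difference with bounded variance. Then $\beta_t\equiv0$, your bias bookkeeping is unnecessary, and your third step, together with the empty-interior condition you correctly flag, proves the claim. What the stated ingredients do not allow, for you or for the paper, is to \emph{derive} that hypothesis from Lemma~\ref{lem:osl_convergence}.
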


\begin{proof}
Projected stochastic approximation is well studied in the literature. For compact convex feasible sets, introducing a projection step preserves the convergence properties of the Robbins--Monro recursion under the usual conditions on step size, smoothness, and noise~\cite{kushner_stochastic_2013,borowski_convergence_2025}. Therefore, as stated in Theorem.~\ref{thm:ofo_convergence}, the projected OFO update can be analyzed within the projected stochastic approximation framework, and its limit points satisfy the constrained first-order optimality condition
$0 \in \nabla \Phi_{\mathrm{DSO}}(p^\star) + N_{\mathcal P}(p^\star)$.

We first verify that the OFO update fits the Assumption.~\ref{ass:regularity_reduced} frame, as commonly used in stochastic approximation analyses~\cite{Madden_high_2024}.
Since the DSO objective depends on the price both directly and indirectly through the aggregate load
response $l_{\mathrm{agg}}(p)$, the corresponding reduced gradient is

\begin{equation}
 g_t
=
\frac{\partial \Phi_{\mathrm{DSO}}}{\partial p}(p_t,l_{\mathrm{agg},t})
+
h_t^\top
\frac{\partial \Phi_{\mathrm{DSO}}}{\partial l_{\mathrm{agg}}}(p_t,l_{\mathrm{agg},t}),   
\end{equation}

where
\[
h_t := \frac{\partial l_{\mathrm{agg}}}{\partial p}(p_t),
\]
denotes the local price--load sensitivity. Replacing the unknown sensitivity $H_t$ by its online
estimate $\hat h_t$ gives the OFO gradient estimate in Eq.~\eqref{eq:ofo_projected}
\begin{equation}
\widehat g_t
=
\frac{\partial \Phi_{\mathrm{DSO}}}{\partial p}(p_t,l_{\mathrm{agg},t})
+
\hat h_t^\top
\frac{\partial \Phi_{\mathrm{DSO}}}{\partial l_{\mathrm{agg}}}(p_t,l_{\mathrm{agg},t}).
\end{equation}
Hence,
\begin{align}
&\widehat g_t = g_t + \xi_t,
\\
&\xi_t
=
(\hat h_t-h_t)^\top
\frac{\partial \Phi_{\mathrm{DSO}}}{\partial l_{\mathrm{agg}}}(p_t,l_{\mathrm{agg},t}),
\end{align}
so the gradient estimation error is induced directly by the sensitivity estimation error. Since the feasible set is compact and $\Phi_{\mathrm{DSO}}$ is continuously differentiable, there exists
a constant $C_q>0$ such that $\|\frac{\partial \Phi_{\mathrm{DSO}}}{\partial l_{\mathrm{agg}}}(p_t,l_{\mathrm{agg},t})\|\le C_q$ for all $t$. Then,
\begin{equation}
\|\xi_t\|
\le
\|\hat h_t-h_t\|\,\|q_t\|
\le
C_q \|\hat h_t-H_t\|,
\end{equation}
which implies
\begin{equation}
\mathbb E[\|\xi_t\|^2]
\le
C_q^2\,\mathbb E[\|\hat h_t-h_t\|^2].
\end{equation}
By Lemma~\ref{lem:osl_convergence}, the sensitivity estimation error has bounded variance, which in turn implies that $\xi_t$ has bounded second moment. In addition, Lemma~\ref{lem:osl_convergence} establishes that $\hat h_t$ is asymptotically unbiased for $h_t$. Because $\xi_t$ is obtained from the sensitivity estimation error through a bounded linear transformation, $\xi_t$ is likewise asymptotically unbiased.

 Strictly speaking, Assumption~\ref{ass:regularity_reduced} is not guaranteed at every iteration by Lemma~\ref{lem:osl_convergence}. The lemma instead shows that the induced gradient error has bounded variance throughout and becomes asymptotically unbiased as $t\to\infty$. In this sense, the noise condition is satisfied in the asymptotic form relevant to the convergence argument. This is sufficient for the projected stochastic approximation analysis used here and is consistent with classical convergence results for stochastic approximation and gradient methods with diminishing errors, where asymptotically negligible bias is allowed~\cite{kushner_stochastic_2013,borowski_convergence_2025,Kushner_Stochastic_2010}.

Finally, let the step-size sequence satisfy Assumption~\ref{ass:stepsize}. Then under the properties established above, the assumptions of the projected stochastic approximation theorem are satisfied. It follows that the iterates generated by Eq.~\eqref{eq:ofo_projected} converge almost surely to the set of projected stationary points of the reduced problem.
Equivalently, every limit point $p^\star$ satisfies
\[
0 \in \nabla \Phi_{\mathrm{DSO}}(p^\star)+N_{\mathcal P}(p^\star).
\]
Since \eqref{eq:reduced_problem} is equivalent to the smooth MPEC, this is exactly the stationary condition
in Eq.~\eqref{eq:smooth_MPEC_result}.
\end{proof}

The proof relies on three steps: smoothing the original problem to obtain a differentiable MPEC, verifying the conditions required for projected stochastic approximation, and then applying a projected stochastic approximation argument. As a result, the smoothened OFO iterates converge almost surely to the stationary set of the smooth MPEC while remaining feasible at every iteration. In the context of online price design, this theorem shows that the smoothened OFO updates converge to a stationary price signal consistent with the smooth MPEC formulation.


\subsection{Efficient Implementation}
\label{sec:enhancement}

To improve the practical efficiency of the online feedback optimization loop, we introduce several enhancements that accelerate convergence and enhance sensitivity learning in complex environments without modifying the core structure of the algorithm.

\textit{\textbf{Event-Triggered Tuning: }} To improve the efficiency and informativeness of sensitivity learning, we apply event-triggered tuning by updating prices only when EV charging activity is detected. Specifically, the gradient-based update is applied conditionally, using a binary mask that activates only during relevant system behavior. The price update becomes
\begin{equation}
     p_{k+1} = p + \alpha_{\text{mask}}\alpha_k \Sigma_{\alpha}(p_k, l_{\text{agg}, k}),
\end{equation}
where $\alpha_{\text{mask}} \in \{0, 1\}^n$ indicates whether the aggregate EV load exists. When 
$\alpha_{\text{mask, i}}=0$, the price update for time step $i$ during the day is skipped, preventing uninformative or misleading updates during periods with no charging activity. This selective mechanism enhances data efficiency by ensuring that sensitivity learning concentrates on time steps that truly reflect price-responsive behavior.

\textit{\textbf{Warm Initialization of Sensitivity: }} To accelerate convergence in the early stages of online sensitivity learning, warm initialization is employed by estimating the initial sensitivity matrix from historical aggregate data. Rather than starting from a neutral or uninformative prior, the sensitivity matrix $H_0 \in \mathbb{R}^{n \times n}$ is initialized using previously observed price-load patterns, typically averaged over several past months with similar conditions. Specifically, in this work, to ensure sufficient excitation for identifying price-response behavior, small random perturbations are added to the reference price $p_{\text{ref} }\in \mathbb{R}^n$ each day. This process generates a diverse set of price-load pairs, from which a basic initial sensitivity estimate is derived. The warm-started sensitivity matrix offers an informed initial guess of the system's price-response behavior, enabling the algorithm to begin with a more accurate model. As online updates continue, this estimate is progressively refined using new measurements, allowing for faster adaptation while preserving long-term accuracy. It should be noted that, both in historical data collection and in online optimization process, only aggregate load measurement is collected and therefore, the proposed enhancement does not violate the privacy protection principle.

The above two enhancements do not modify the core structure of the OFO algorithm and are only designed to improve practical performance while preserving the fundamental update mechanism. Therefore they do not affect the convergence properties established in previous analysis.

\section{Case Study Settings}
\label{sec:settings}

In the previous sections, both the original and smoothened OFO formulations have been introduced.
In this case study, we utilize the original formulation in Eq.~\eqref{eq: ofo_formulation} to evaluate practical performance, while the smooth reformulation in Eq.~\eqref{eq: smooth_formulation} is primarily used for convergence analysis. 
Although the two formulations behave similarly in our simulations (Section~\ref{sec:comp_origin_smooth}), we report case-study results based on the original formulation because it is the one implemented in practice. The smoothened formulation is introduced as an analysis surrogate as it ensures differentiability and enables the convergence proof, but it also introduces approximation parameters and may slightly bias the objective and constraint handling. For transparency, all reported numerical results therefore correspond to the original OFO implementation, while the smoothened formulation is used to support the theoretical analysis. Besides, a constant step size is chosen empirically through trial and error to balance convergence speed, numerical stability, and implementation simplicity. This choice is common in finite-horizon implementations of stochastic approximation~\cite{kushner_stochastic_2013}. With a constant step size, convergence is often characterized in a weak sense, corresponding to stable tracking behavior and small steady-state fluctuations around the stationary set.

\subsection{Practical Implementation}
\label{sec:practical_implementation}

This study considers EV charging as the representative flexible load for optimal pricing in the power systems. EV charging is chosen because unregulated charging can significantly increase peak demand (up to 74.4\% at full penetration~\cite{Jaru_predicting_2023}) and place additional stress on grid operation. Non-flexible loads are excluded because they cannot be shifted through price signals and therefore do not contribute to demand-side flexibility. The price-responsive agent problem is defined by Eq.~\eqref{eq: ofo_formulation5}--\eqref{eq: ofo_formulation7}.

At the upper level, the DSO manages grid flexibility and maintains reliable operation under increasing load variability, particularly from EV charging. In this work, dynamic pricing is used to reduce peak demand while limiting deviations in total electricity cost from the reference tariff. In this setting, limiting cost deviation means that peak reduction should not be achieved by imposing an excessive additional cost burden on consumers. The corresponding DSO-level problem is defined in Eq.~\eqref{eq: ofo_formulation1}--\eqref{eq: ofo_formulation3}.
The nonsmooth term $\max(l_{\text{agg}})$ does not hinder implementation of the original OFO. In the CasADi-based implementation, the active maximum component is used to construct a valid subgradient, which provides a descent direction for the price update. This practical detail contributes to the strong performance of the original OFO and helps explain why the original and smooth reformulations show similar behavior in the case studies.

Further details of the model formulation are provided in Section~\ref{sec:prob}.

 \subsection{EV Charging Datasets}
 \label{sec:ev_dataset}

\begin{figure}[!t]
    \centering
    \includegraphics[width=\linewidth]{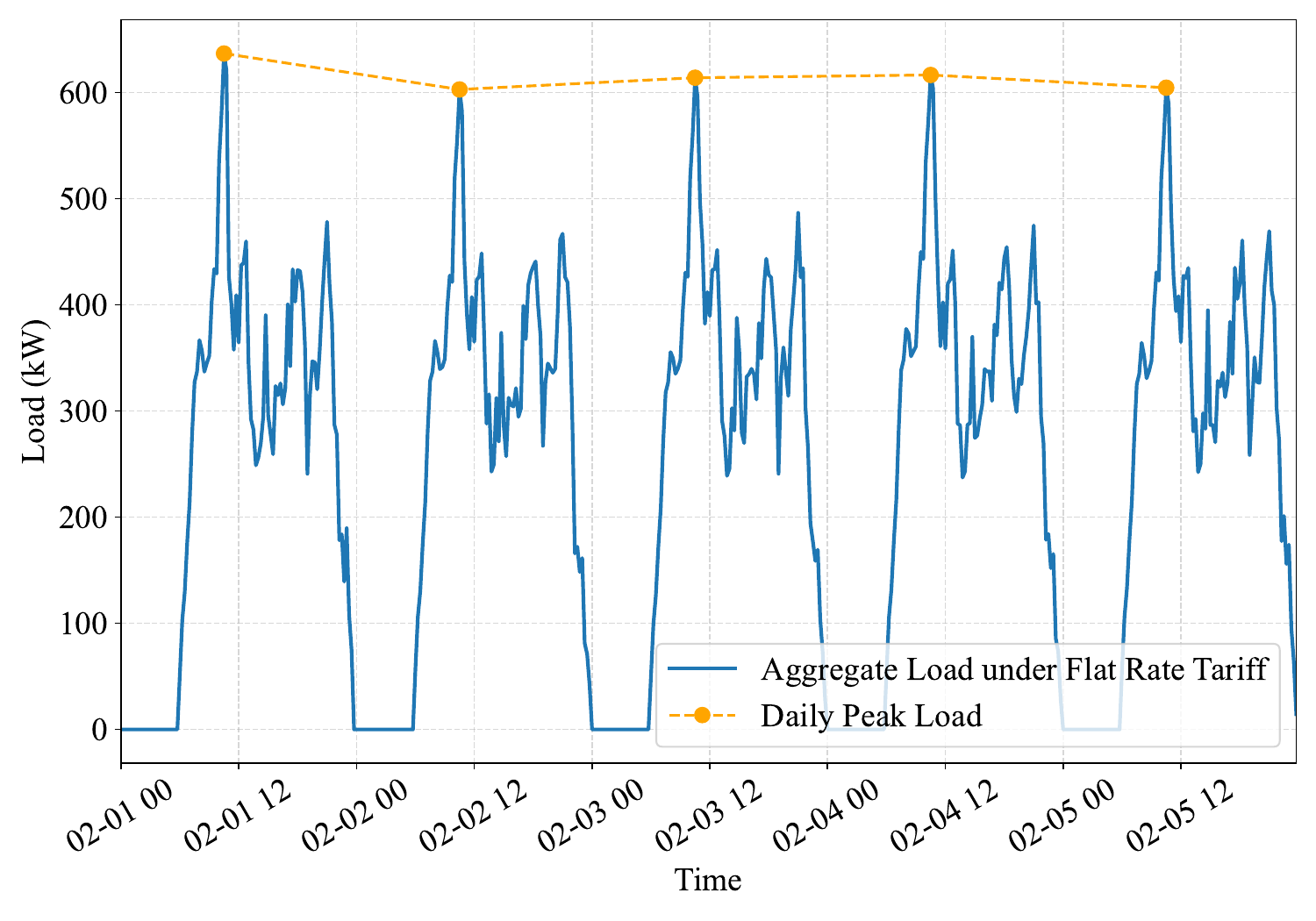}
    \caption{Visualization of Dynamic EV fleet under flat-rate tariff.}
    \label{fig:dyanmic_dataset}
\end{figure}

The proposed pricing strategies are evaluated using synthetic EV charging demand based on structured datasets that capture representative charging-session behavior. Each dataset contains the information required for time-resolved charging simulation, including session start and end times, total energy demand and the maximum charging power available at each time step.

In this work, we first consider an average daytime workplace charging scenario to represent typical arrival and departure patterns in commercial or public charging facilities. Each charging session starts and ends within the same day (i.e., no overnight charging). Simulations are performed at a 15-minute resolution, corresponding to 96 time steps per day. Specifically, 336 EVs are simulated over each day within the interval from 06:00 to 24:00, consistent with a workplace charging setting in which most EVs arrive in the morning and depart in the afternoon. Weekends are omitted to focus on a relatively stable and repeatable workplace charging pattern, thereby limiting variability due to irregular travel behavior and lower workplace occupancy. Moreover, because fewer EVs are typically present at the workplace on weekends, significant charging peaks are less likely to occur. Extending the study to weekly charging patterns (including weekends) is left for future work, since weekday--weekend transitions introduce stronger nonstationarity and may require faster adaptation of the online pricing and sensitivity-learning updates.

The charging-session structure (e.g., connection times and charging limits) is fixed across days, while the energy demand is sampled stochastically to represent day-to-day variability in travel behavior and state-of-charge conditions. 
\jr{This setup preserves key characteristics of charging-session behavior from the synthetic EV profiles while introducing demand uncertainty.}

As a result, the aggregate load profile evolves dynamically over time, reflecting both inherent demand variability and adaptive responses to price signals. To illustrate the baseline charging behavior, charging is first simulated under a flat-rate tariff without load-shaping control, as shown in Fig.~\ref{fig:dyanmic_dataset}. The figure presents the resulting aggregate load profile over a continuous 5-day simulation horizon.

\subsection{Reference Price}

\begin{figure}
    \centering
    \includegraphics[width=\linewidth]{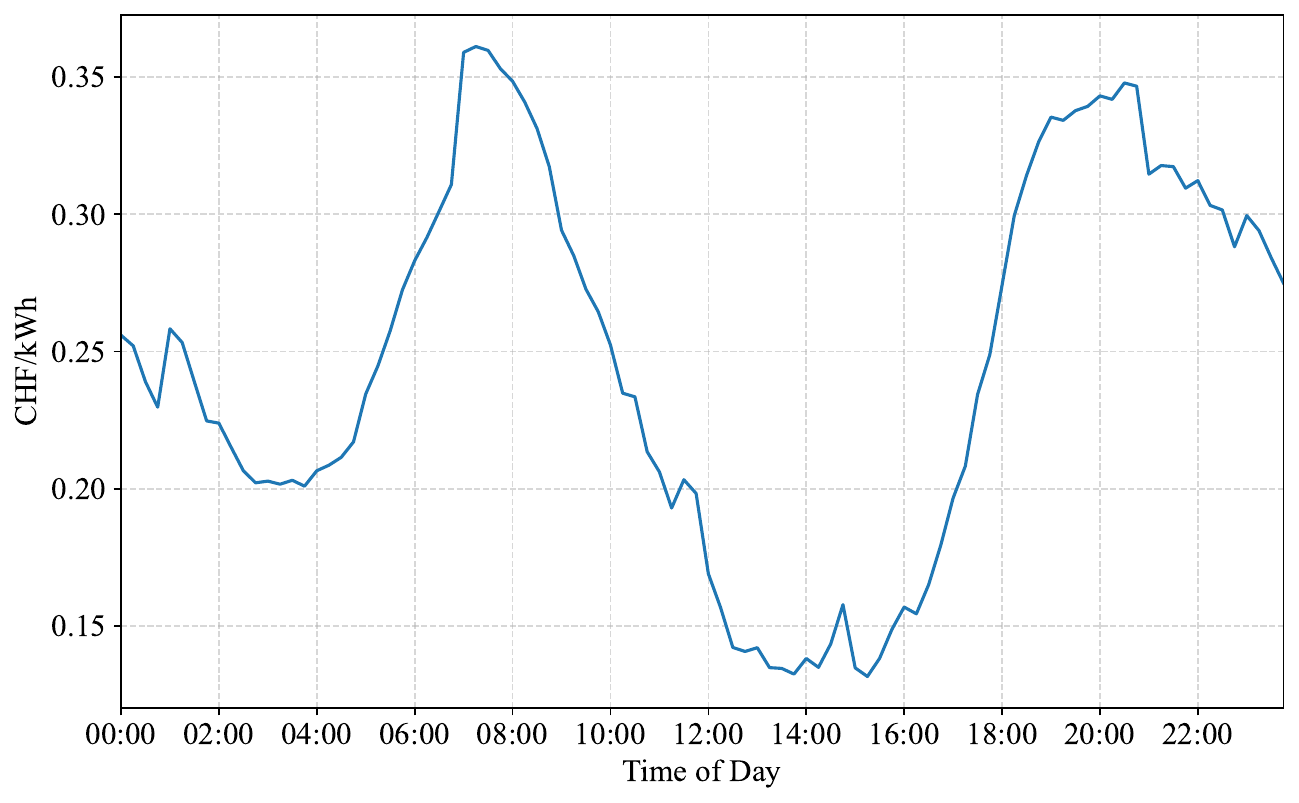}
    \caption{Reference price profile for a single day (observed on 26 May 2025).}
    \label{fig:reference_price}
\end{figure}

Fig.~\ref{fig:reference_price} shows the reference price profile $p_{\text{ref}}$ used in the experiments, based on the Vario dynamic tariff offered by Groupe E, Switzerland~\cite{groupee_vario}. The profile has two distinct high-price periods: a morning peak from approximately 07:00 to 09:00 and an evening peak from about 18:00 to 21:00, with lower prices during midday and overnight. This structure is consistent with typical Swiss demand patterns, where higher consumption is often associated with commuting hours and residential activities.

The reference price is used as a realistic baseline for performance comparison, ensuring that the results remain relevant for practical tariff design. In the experiments, it also serves as the initial price profile for all simulations, allowing a direct assessment of peak-demand reduction and load reshaping relative to a representative dynamic tariff already used in practice.

\subsection{Sensitivity Matrix for Warm Initialization of Online Sensitivity Learning}
\label{sec:historical}

In the proposed OFO method, the sensitivity matrix $H$ approximates the relationship between the price vector and the aggregate load response. Its accurate estimation is important for effective price updates, especially in the early stage when only limited feedback is available. To improve transient performance, this work uses a warm-start initialization of $H$ (Section~\ref{sec:enhancement}).

The initial matrix $H_0$ is estimated from historical aggregate-load data collected under small random day-to-day perturbations of the price profile. Let the data span $D$ days, and let $p^{(d)} \in \mathbb{R}^{96}$ and $l_{\mathrm{agg}}^{(d)} \in \mathbb{R}^{96}$ denote the price and aggregate load profiles on day $d$, respectively. Define $\Delta p_{\mathrm{hist}} \in \mathbb{R}^{(D-1)\times 96}$ and $\Delta l_{\mathrm{agg},\mathrm{hist}} \in \mathbb{R}^{(D-1)\times 96}$ as the corresponding day-to-day changes in price and aggregate load, with rows given by
\[
\Delta p_{\mathrm{hist}}^{(d-1,:)} \!=\! p^{(d)} \!-\! p^{(d-1)}, ~~
\Delta l_{\mathrm{agg},\mathrm{hist}}^{(d-1,:)} \!=\! l_{\mathrm{agg}}^{(d)} \!-\! l_{\mathrm{agg}}^{(d-1)}, \quad d=2,\dots,D.
\]
We estimate $H_0$ through the following least-squares formulation
\begin{align}
    H_0 &= \arg\min_H \left\| \Delta l_{\mathrm{agg},\mathrm{hist}} - \Delta p_{\mathrm{hist}} H^\top \right\|_F^2 \nonumber\\
    &= \left[\left(\Delta p_{\mathrm{hist}}^\top \Delta p_{\mathrm{hist}}\right)^{-1}
    \Delta p_{\mathrm{hist}}^\top \Delta l_{\mathrm{agg},\mathrm{hist}}\right]^\top,
    \label{eq:lstsqr}
\end{align}
where $\|\cdot\|_F$ denotes the Frobenius norm.

\begin{figure}[!t]
        \centering
        \includegraphics[width=\linewidth]{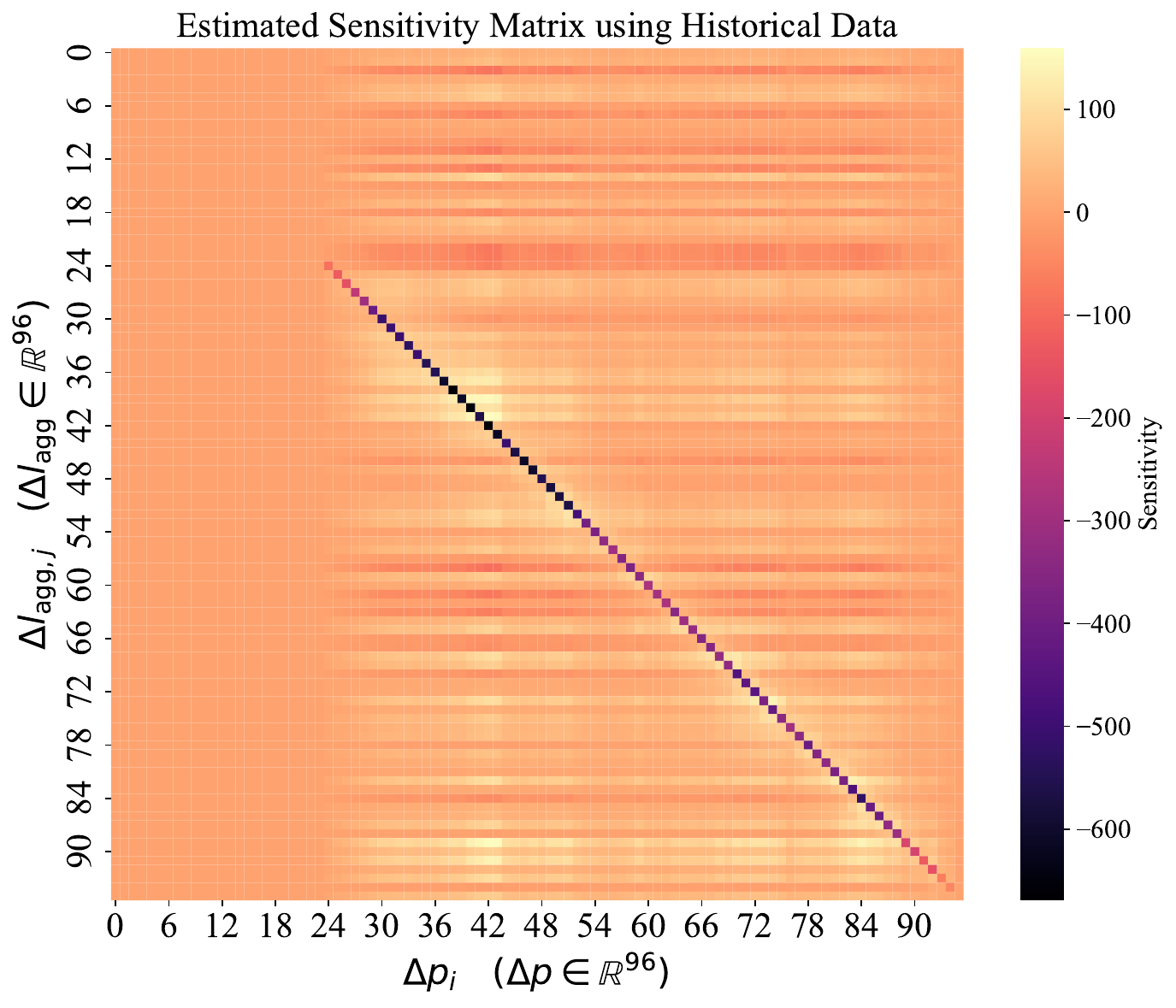}
    \caption{Estimated Sensitivity Matrix for Warm Initialization of Online Sensitivity Learning.}
        \label{fig:reallarge_H0}
\end{figure}

Fig.~\ref{fig:reallarge_H0} shows the estimated initial sensitivity matrix $H_0$ for the dataset described in Section~\ref{sec:ev_dataset}. The horizontal axis $\Delta p_i$ denotes the price difference at time step $i$ ($\Delta p \in \mathbb{R}^{96}$), and the vertical axis $\Delta l_{\mathrm{agg},j}$ denotes the aggregate load difference at time step $j$ ($\Delta l_{\mathrm{agg}}\in\mathbb{R}^{96}$). Each matrix entry quantifies the sensitivity of the aggregate load at time $j$ to a marginal change in price at time $i$.

The estimated sensitivities are concentrated around a diagonal band aligned with the active charging periods, indicating that price changes primarily affect load within or near the corresponding charging windows. Outside these periods, the entries are close to zero, which reflects the limited influence of price changes when no EVs are charging. The dominant negative values near the diagonal indicate that increasing the price at a given time step tends to reduce the aggregate charging load at that time, consistent with the expected price--demand response. Smaller positive off-diagonal entries indicate inter-temporal load shifting: load reduced at one time step is partially redistributed to other feasible charging periods to satisfy energy requirements. Overall, the structure of $H_0$ captures the expected EV charging response, namely an immediate reduction in charging power under higher prices followed by partial rescheduling within the available charging horizon.

\subsection{Experimental Settings}
\label{sec:exp_setting}

The experiments are conducted with a control and sampling resolution of 15 minutes, yielding $n=96$ time steps per day. At each step, the DSO applies a price vector $p\in\mathbb{R}^{96}$ to all agents. In response, price-responsive loads (EVs) adjust their charging schedules according to their session parameters and constraints. The aggregate load $l_{\mathrm{agg}}\in\mathbb{R}^{96}$ is obtained by summing the individual EV charging profiles and is fed back to the OFO module, which iteratively updates the price signal based on the observed response. In practice, DSO can directly measure the aggregated load at the common coupling point instead of summing the profiles. \jr{The simulation covers four months in 2020, from February to May, with weekends treated as weekdays, totaling 121 days.
}

The OFO step size is set to a constant value $\alpha=5\times 10^{-5}$, chosen empirically to ensure stable updates in a highly coupled charging environment, where small price changes may induce large aggregate-load variations. The measurement and process noise covariances are set to $\Sigma_m=\Sigma_p=100$, reflecting the dynamic simulation setting, and the initial covariance is set to $\Sigma_0=10^{3}I$ to ensure fast learning from incoming feedback. Price bounds are fixed at $p_{\min}=0.001$ and $p_{\max}=1.0$ for all datasets, while the peak-load limit $l_{\max}$ is set as $750$~kW to avoid extreme charging power. The coefficient $b$ in the OFO objective is set to $0.0002$. This small value is chosen to prioritize peak-load reduction, while the cost-deviation term acts mainly as a soft safeguard against excessively large deviations in total electricity cost relative to the reference-price case.

For the first two days, the initial price vectors are generated from the reference price profile by adding small Gaussian perturbations:
\begin{align}
p_0 &= p_{\mathrm{ref}}, \label{eq:init_price_noise_both} \\
p_1 &= p_{\mathrm{ref}} + \varepsilon_1, 
& \varepsilon_1 &\sim \mathcal{N}(0,\sigma_p^2 I),
\end{align}
with $\sigma_p=0.01$. The perturbation magnitude is kept small to avoid large transient load violations during the initial adaptation period.
\begin{figure*}[!t]
    \centering
    \includegraphics[width=0.8\linewidth]{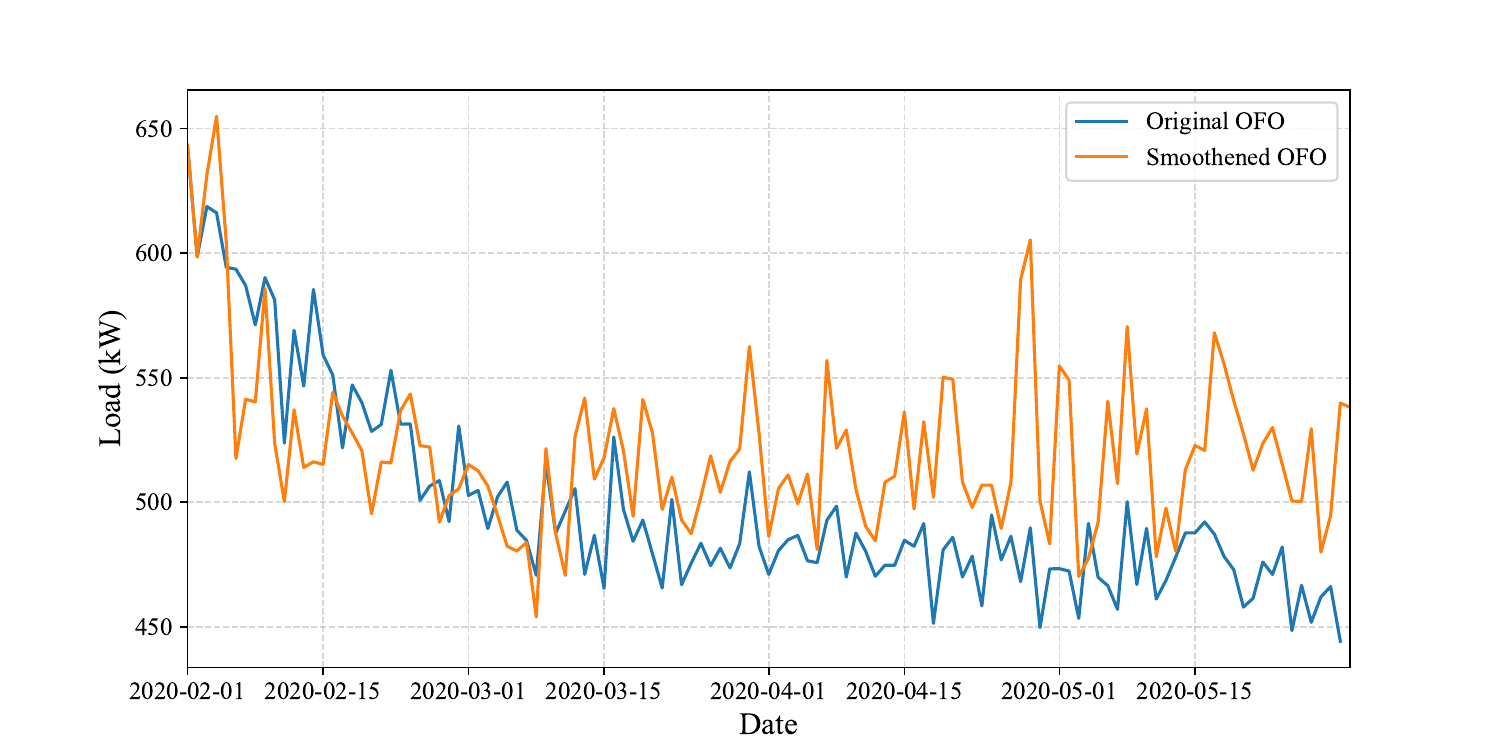}
    \caption{Comparison of Original OFO and Smoothened OFO Performances.}
    \label{fig:comp_smooth}
\end{figure*}

To illustrate robustness to initialization, we additionally simulate three more cases by applying a second small Gaussian perturbation to the initialized prices in~\eqref{eq:init_price_noise_both}:
\begin{align}
p_0^{(r)} &= p_0 + \delta_0^{(r)}, 
& \delta_0^{(r)} &\sim \mathcal{N}(0,\tilde{\sigma}_p^2 I), \\
p_1^{(r)} &= p_1 + \delta_1^{(r)}, 
& \delta_1^{(r)} &\sim \mathcal{N}(0,\tilde{\sigma}_p^2 I),
\label{eq:init_price_noise_robustness}
\end{align}
for $r=1,2,3$, with $\tilde{\sigma}_p=0.02$. The values $\sigma_p=0.01$ and $\tilde{\sigma}_p=0.02$ are randomly chosen to generate mild initialization variability without causing excessive transient constraint violations. Together with the nominal run, these four runs are used to report average performance and confidence intervals in the following section.

\jr{
To further evaluate the robustness of OFO under model mismatch, we introduce a demand-timing mismatch between the model used to compute the warm-start sensitivity matrix as well as the MPEC benchmark price, and the true agent response observed during online operation. In the original lower-level model, Eq.~\eqref{eq: ofo_formulation6} enforces demand fulfillment by requiring the cumulative charging load of EV~$i$ over its availability window to match its energy demand, where ($c_i \in \{0,1\}^{n}$) indicates the availability window of EV~$i$. In the online simulation, the true availability window is delayed by one time step. Let $S\in\mathbb{R}^{n\times n}$ denote the one-step delay operator. The true availability indicator is defined as
\begin{equation}
\tilde c_i = S c_i,
\end{equation}
and the corresponding true demand fulfillment constraint becomes
\begin{equation}
\tilde c_i^\top l_i = d_i.
\end{equation}
Hence, the DSO initializes OFO with a sensitivity matrix derived from the original availability indicator $c_i$, whereas the measured aggregate load is generated by agents following the shifted indicator $\tilde c_i$. This change shifts the feasible charging periods and therefore alters the aggregate price-response structure, making the MPEC-based price computed from the original model ($c_i$) suboptimal for the true system. The experiment is designed to demonstrate that OFO can maintain effective performance under model mismatch by using aggregate feedback to update the sensitivity matrix online, even when the warm-start sensitivity is structurally inaccurate, as shown in Section~\ref{sec:mismatch_result}.
}

\begin{figure*}[!t]
    \centering
    \includegraphics[width=0.8\linewidth]{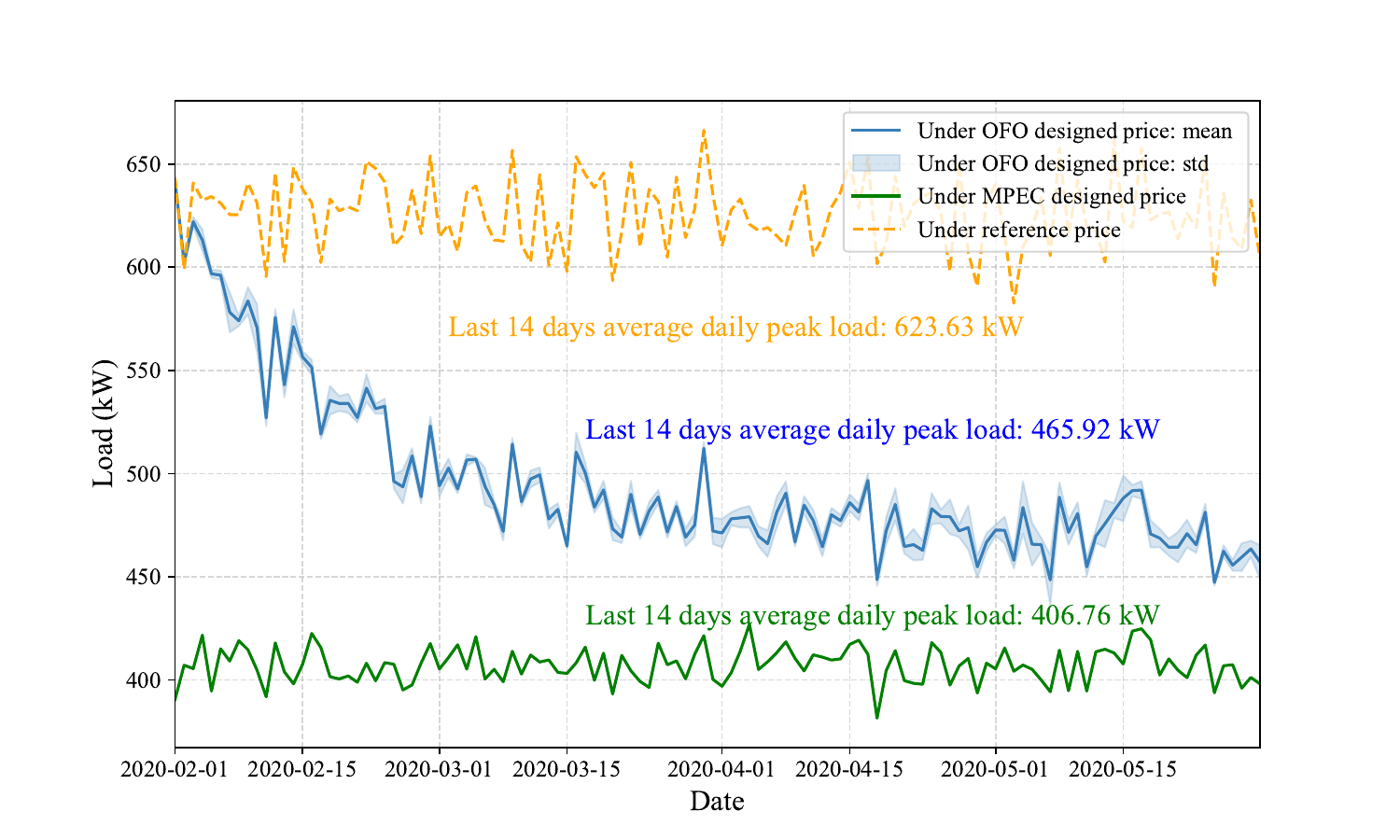}
    \caption{Comparison of Online Daily Peaks Loads of Three Methods in Four Months. 
    }
    \label{fig:re_dailyload}
\end{figure*}

\section{Results and Discussion}
\label{sec:results_discussion}

\subsection{Original and Smoothed OFO Comparison}
\label{sec:comp_origin_smooth}
\jr{
As noted at the beginning of Section~\ref{sec:settings}, the numerical evaluation uses the original max-based OFO implementation in Eq.~\eqref{eq: ofo_formulation}, while the smoothed formulation in Eq.~\eqref{eq: smooth_formulation} is mainly introduced for analytical purposes. Fig.~\ref{fig:comp_smooth} compares the aggregate load trajectories obtained from the original OFO and the smoothed OFO. \textbf{The two trajectories exhibit similar overall behavior, showing that the smoothed formulation preserves the main closed-loop response of the original method. However, the original max-based OFO achieves a slightly larger load reduction under the original problem setting.} In the smoothed formulation, the peak term is replaced by the unnormalized surrogate \(\log\sum_t \exp(\tau l_t)\), with \(\tau=10\). This relatively large value is needed because smaller values, such as \(\tau=1\), provide a weak peak-reduction signal in the closed-loop update. Since the unnormalized LSE approximates \(\tau\max_t l_t\), increasing \(\tau\) both sharpens the approximation and amplifies the peak-related gradient. The remaining difference between the two trajectories suggests that the LSE gradient does not fully reproduce the behavior of the exact maximum. Specifically, the LSE gradient distributes the marginal penalty across high-load intervals, whereas the max-function subgradient directly targets the active peak. Therefore, the smoothed formulation is useful for differentiability and convergence analysis, while the original max-based OFO remains preferable for the practical case-study evaluation.
}

\begin{table*}[t]
\centering
\caption{Comparison of last-14-days average daily peak load.}
\label{tab:peak_comparison}
\begin{threeparttable}
\begin{tabular}{lcccc}
\toprule
Method & Peak load (kW) & Reduction vs. baseline\tnote{1} & Gap to MPEC\tnote{2} & Gap / baseline peak\tnote{3} \\
\midrule
Reference price   & 623.63 & --        & --      & --   \\
MPEC benchmark    & 406.76 & 34.78\%   & --      & --   \\
OFO               & 465.92 & 25.29\%   & 14.6\%  & 9.5\% \\
\bottomrule
\end{tabular}
\begin{tablenotes}[flushleft,para]
\footnotesize
\item[1] Reduction vs. baseline:
$\frac{P_{\mathrm{Ref}}-P_{\mathrm{Method}}}{P_{\mathrm{Ref}}}\times 100\%.$
\item[2] Gap to MPEC:
$
\frac{P_{\mathrm{OFO}}-P_{\mathrm{MPEC}}}{P_{\mathrm{MPEC}}}\times 100\%.
$
\item[3] Gap / baseline peak:
$
\frac{P_{\mathrm{OFO}}-P_{\mathrm{MPEC}}}{P_{\mathrm{Ref}}}\times 100\%.
$

\end{tablenotes}
\end{threeparttable}
\end{table*}

\begin{figure*}[!h]
    \centering
    \includegraphics[width=0.7\linewidth]{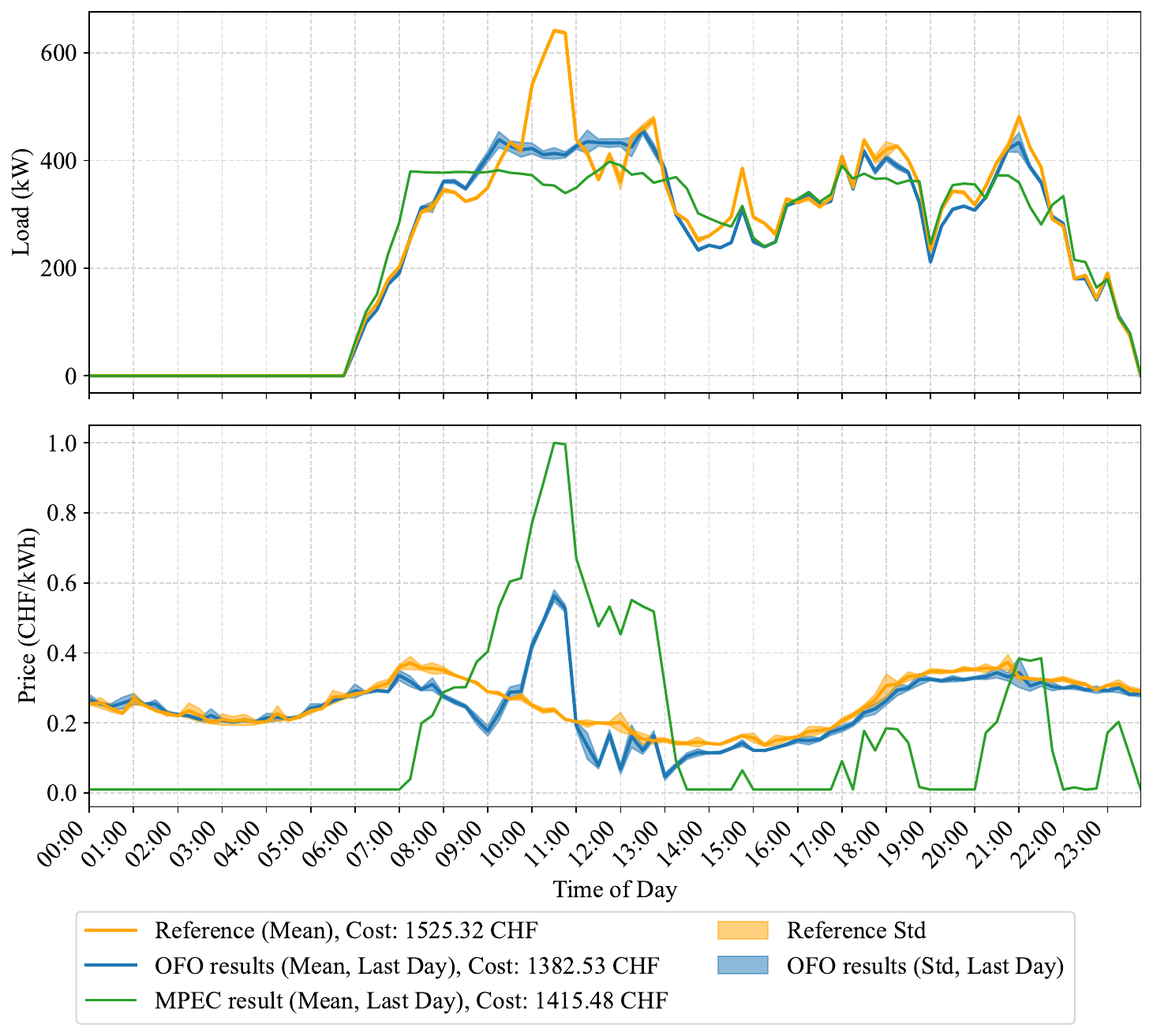}
    \caption{Comparison of daily load  and price profiles of the initial day and the last day. 
    }
    \label{fig:re_lastday}
\end{figure*}

\subsection{Peak-load Reduction Performance}

Fig.~\ref{fig:re_dailyload} compares the online daily peak load under (i) the proposed OFO-designed prices, (ii) the offline MPEC benchmark, and (iii) the static reference-price baseline. The MPEC benchmark is constructed from the static average dataset and therefore represents an offline full-information benchmark under a fixed daily charging pattern, rather than a predictive day-ahead solution for the actual next-day EV charging profile. In particular, since the next day’s charging sessions are not known in advance, the benchmark cannot pre-determine the truly optimal future price for the realized daily demand.

Table.~\ref{tab:peak_comparison} summarizes the peak-load results over the last 14 simulation days. The reference-price case gives the highest average daily peak, whereas OFO reduces this peak by $25.29\%$. The MPEC benchmark achieves the highest peak reduction, at $34.78\%$. Although OFO does not fully match the offline benchmark, the remaining difference is limited. Relative to the MPEC peak, the OFO peak is higher by only $14.6\%$. When expressed on the scale of the uncontrolled system, the same difference corresponds to just $9.5\%$ of the baseline peak. Together, these two measures indicate that OFO achieves performance close to the benchmark within a limited number of online updates, while still substantially reducing the original system peak.

Two observations are worth noting. First, OFO exhibits a transient adaptation phase early in the horizon, during which daily peaks decrease rapidly before stabilizing. This is expected in an online setting where the warm-start initialization provides a favorable starting point, and the price signal is progressively refined using sequential feedback. Second, although OFO follows the benchmark trend, a residual gap remains relative to MPEC. This gap is expected in finite-horizon operation because OFO adapts online using approximate sensitivities and only a limited number of updates. Nevertheless, our theoretical analysis shows that relying only on aggregate feedback does not preclude convergence under the stated assumptions. Consistent with this, longer-horizon simulations indicate that OFO continues to reduce peak load beyond the initial transient, although the rate of improvement diminishes over time. Overall, the observed stabilization and gradual peak reduction are consistent with the projected stochastic approximation result that the OFO iterates converge to the stationary set of the smoothened problem.

It is also important to emphasize the difference in information access and computational requirements between the two methods. The MPEC benchmark is solved offline with full access to accurate individual-level data and requires substantially higher computation; solving a 336-EV instance takes more than 2.6 hours on a personal computer. In contrast, OFO operates in closed loop using only aggregate measurements and lightweight updates, which makes it suitable for real-time deployment.

The shaded band in Fig.~\ref{fig:re_dailyload} shows the variability across the four initialization runs defined in Eqs.~\eqref{eq:init_price_noise_both}--\eqref{eq:init_price_noise_robustness}. The relatively narrow spread after the initial adaptation phase indicates that the proposed OFO method is robust to variations in the initial conditions and delivers consistent peak-reduction performance across runs.

\subsection{Characteristics of Learned Price Signals}
\label{sec:price_results}

Fig.~\ref{fig:re_lastday} shows the price and aggregate-load profiles on the last simulation day. Relative to the reference tariff, OFO raises prices during congested periods and lowers them during less congested hours, thereby inducing inter-temporal shifting of charging demand. This stronger temporal price contrast reduces the peak of $l_{\mathrm{agg}}$ and redistributes charging toward lower-price intervals while still satisfying the energy requirements. The MPEC benchmark follows the same qualitative mechanism but applies a more aggressive peak-hour price signal, resulting in a lower peak load on that day.

The cost values reported in Fig.~\ref{fig:re_lastday} further show that peak shaving is not achieved at the expense of higher total expenditure. On the last day, the reference case incurs a total cost of $1525.32$~CHF, whereas OFO reduces the cost to $1382.53$~CHF, corresponding to a decrease of $142.79$~CHF or $9.36\%$. The MPEC benchmark yields a cost of $1415.48$~CHF, which is $109.84$~CHF or $7.20\%$ lower than the reference case. Compared with MPEC, OFO attains a slightly lower cost on that day by $32.95$~CHF ($2.33\%$). This outcome is acceptable because it benefits both consumers and the DSO. Consumers pay less for charging under OFO than under the reference tariff, while the DSO benefits from lower peak-related operational stress and associated mitigation costs. It is also consistent with the objective design: peak reduction is prioritized, and the cost-deviation penalty is assigned a very small weight ($b=0.0002$). As a result, the penalty mainly serves as a safeguard against extreme deviations rather than enforcing strict cost matching. In practice, this trade-off can be tuned by adjusting $b$: a larger value places more emphasis on cost consistency, whereas a smaller value further prioritizes peak shaving.

\begin{figure*}[!t]
    \centering
    \includegraphics[width=0.7\linewidth]{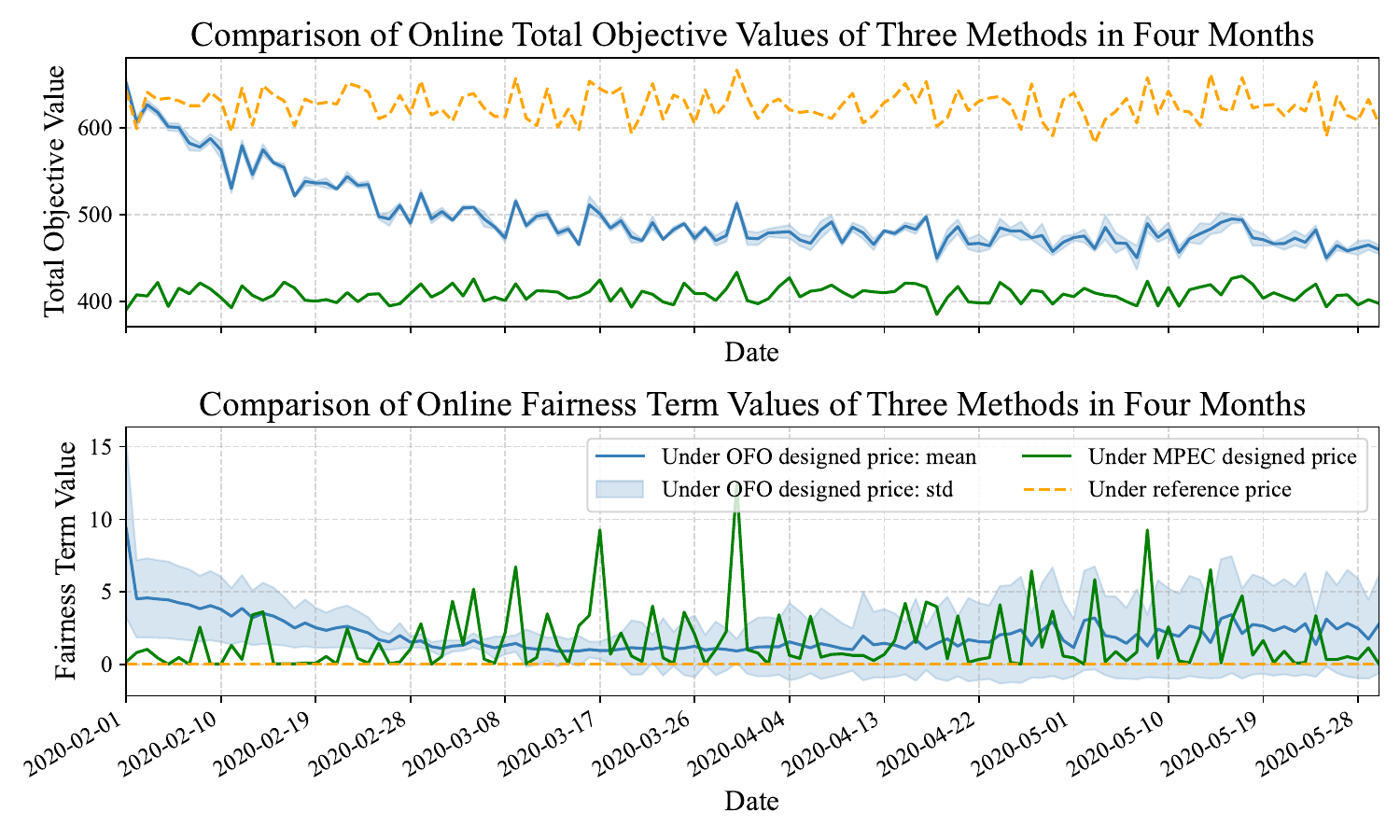}
    \caption{Comparison of Online Objectives of Three Methods in Four Months. 
    }
    \label{fig:re_objective}
\end{figure*}

Finally, the variability bands Fig.~\ref{fig:re_lastday} remain relatively narrow after the initial adaptation phase, indicating that the learned price profiles are robust to small perturbations in the initial prices. Together, these results show that OFO consistently learns a stable peak-shaving pricing pattern, characterized by stronger price separation between peak and off-peak hours while maintaining a broadly stable average price level.

\subsection{Objective Tracing}
\label{sec:objective_tracing}

Fig.~\ref{fig:re_objective} shows the online evolution of the DSO objective and its cost-deviation term for the three methods. Since the peak-load component has already been discussed in Fig.~\ref{fig:re_dailyload}, the focus here is on the overall objective value and the evolution of the cost deviation term.

In the top panel of Fig.~\ref{fig:re_objective}, the total objective under OFO decreases rapidly during the initial adaptation phase and then stabilizes, consistent with the progressive peak reduction observed in Fig.~\ref{fig:re_dailyload}. The MPEC benchmark achieves the lowest objective values throughout, as expected from an offline optimization baseline. Importantly, the OFO objective remains consistently below that of the reference-price case over the entire horizon, indicating that the proposed online updates effectively improve the DSO objective while maintaining the intended balance between peak reduction and cost deviation relative to a static tariff.

\begin{figure*}[!t]
    \centering
    \includegraphics[width=0.8\linewidth]{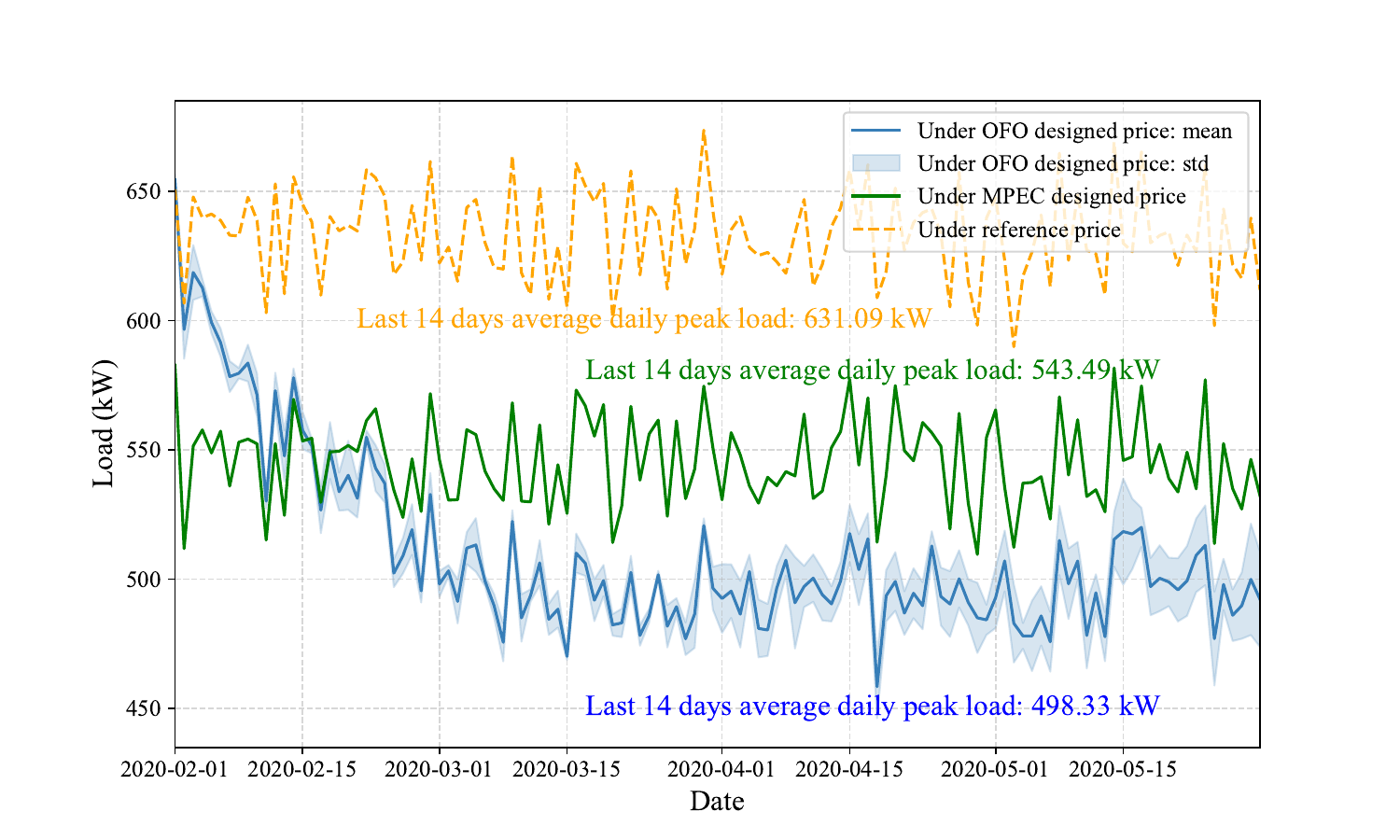}
    \caption{Comparison of Online Daily Peaks Loads of Three Methods in Four Months under Model Mismatch. 
    }
    \label{fig:model_mismatch}
\end{figure*}

The bottom panel of Fig.~\ref{fig:re_objective} shows the evolution of the cost deviation term over time. The reference-price case remains zero, as expected, because the cost-deviation term is defined relative to the reference price. Under OFO, this term remains bounded and generally small, with only transient increases during periods of stronger price adjustment. This behavior is mainly due to the OFO update mechanism. Since OFO is initialized close to the reference price and uses a small constant step size, the price trajectory evolves gradually, which helps keep the cost-deviation term small and avoids abrupt departures from the reference case. In contrast, the MPEC benchmark applies a fixed offline-optimized price trajectory. Because it is optimized for the general daily model, it may place stronger emphasis on peak reduction in certain periods, which can lead to larger cost deviations under other daily conditions and therefore occasional spikes in the deviation term.

\jr{
\subsection{OFO under Model Mismatch}
\label{sec:mismatch_result}
Fig.~\ref{fig:model_mismatch} evaluates the performance of OFO under the demand-timing mismatch introduced in the experiment settings (Section~\ref{sec:exp_setting}). In this case, the warm-start sensitivity matrix and the MPEC benchmark price are computed from the original charging-window model, whereas the online aggregate-load response is generated using charging windows delayed by one time step. Therefore, the initial sensitivity information available to OFO is structurally inaccurate.

Under this mismatch, the MPEC benchmark becomes suboptimal because its price is fixed after being computed from the nominal lower-level model. The resulting last-14-day average daily peak load is $543.49$~kW, indicating that the MPEC solution does not fully match the true delayed-window system. In contrast, OFO updates its sensitivity estimate online using the observed aggregate-load response. This adaptation reduces the last-14-day average daily peak load to $498.33$~kW, which is $45.16$~kW lower than the MPEC benchmark.
These results show that OFO can recover useful price-response information even when the warm-start sensitivity matrix is computed from an inaccurate lower-level model. Compared with the MPEC benchmark, OFO is more robust to model mismatch because it uses aggregate feedback to update the sensitivity estimate online instead of relying solely on a fixed nominal lower-level model. However, compared with the OFO performance in Fig.~\ref{fig:re_dailyload}, the mismatched warm start leads to a smaller peak reduction over the same operating period. \textbf{Overall, model mismatch degrades both methods, but its effect is more severe for the fixed MPEC benchmark, whereas OFO can partially compensate for the mismatch through online feedback-based adaptation.}

To be noted, the OFO performance under model mismatch is nevertheless weaker than in the non-mismatched case. This is expected because the delayed charging windows change the feasible response structure of the lower-level agents. As a result, the warm-start sensitivity matrix does not only contain magnitude errors, but also assigns price-response effects to inaccurate time periods. During the early online iterations, OFO therefore updates the price using an imperfect sensitivity direction, which slows convergence and may lead to less effective peak shifting. 
}

\subsection{Practical Considerations: Scalability, Data Access, and Constraint Violations}
\label{sec:practical_considerations}

Compared with offline full-information benchmarks, OFO generally requires more \emph{real days} to converge to an effective price profile that reduces peak load while preserving cost similarity, because the prices are refined online using sequential aggregate feedback. This slower adaptation is the natural trade-off of operating under limited information with lightweight updates and without access to accurate individual-level EV data. In practice, this trade-off is justified because the next day’s EV charging profile is not known in advance, so a truly optimal day-ahead price cannot be computed exactly for the realized demand.

For this reason, OFO is the more practical choice when data access is restricted and the EV population is large. The method scales naturally with the number of EVs because the DSO update relies only on the aggregate load signal, and it avoids the substantial computational burden of solving offline bi-level or game-theoretic formulations at large scale. In particular, solving an MPEC repeatedly in daily operation is impractical, both because accurate next-day EV profiles are unavailable and because the computation time is high. By contrast, OFO can be updated directly from measured system response and is therefore better suited to real-time or repeated operational use. From an operational perspective, OFO also exhibits stable behavior during the adaptation process. In the simulations, the iterative updates rarely violate the imposed operating limits, which is consistent with the use of bounded prices, explicit peak-load constraints, and a conservative constant step size to suppress large transients.

\section{Conclusion}
\label{sec:conclusion}

This paper investigated online feedback optimization for dynamic electricity pricing in heterogeneous distribution systems, using EV charging as a representative flexible demand. A smooth reformulation was introduced to facilitate convergence analysis, and the projected OFO updates are shown to converge to the projected stationary set of the problem in a limited-information setting, where the DSO observes only aggregate load measurements.

Simulation results over four months show that OFO achieves substantial and consistent peak reduction. Over the last 14 simulation days, the average daily peak load is reduced by $25.3\%$ relative to the reference tariff. Compared to the minimum peak load attained in an offline full-information benchmark, the proposed method incurs a gap of $14.6\%$. This result is particularly relevant from a practical perspective, since OFO operates with aggregate measurements only and lightweight online updates, whereas the benchmark requires accurate individual data and substantially higher computational effort. \jr{Additionally, robustness tests under multiple initializations show limited variability after the initial adaptation phase, while the model-mismatch test demonstrates that OFO can maintain effective performance when the lower-level model is inaccurate.}

Overall, the above results indicate that OFO provides an effective and practical solution for peak-demand management under limited observability. The proposed method features efficient peak load reduction and moderate computational requirements, making it preferable in real-world applications where detailed user-level data are unavailable.

Future work will extend the proposed method to settings with periodic non-stationarity, such as weekly demand patterns with weekend effects. Other important directions include adaptive tuning of the peak--cost trade-off, incorporation of additional network constraints and uncertainty sources, and assessment under partial participation in large-scale deployments. It is also important to relax the assumption of fully rational user response by considering heterogeneous behavioral patterns, including partial, delayed, or absent responses to price signals.

\section*{Declaration of Generative AI and AI-assisted Technologies in the Manuscript Preparation Process}

During the preparation of this work the author(s) used ChatGPT in order to improve language clarity and conciseness. After using this tool, the author(s) reviewed and edited the content as needed and take(s) full responsibility for the content of the publication.

\section*{Acknowledgement}

This work was supported by NCCR Automation, a National Centre of Competence in Research, funded by the Swiss National Science Foundation (grant number 51NF40\_225155) and Max Planck ETH Center for Learning Systems.

\section*{CRediT Authorship Contribution Statement}
\noindent\textbf{Jiarui Yu:} Conceptualization, Methodology, Investigation, Writing - Original Draft. \textbf{Zhiyu He:} Conceptualization, Supervision, Writing - Review \& Editing. \textbf{Wenbing Wang:} Supervision, Writing - Review \& Editing. \textbf{Colin N. Jones:} Writing - Review \& Editing, Funding acquisition. \textbf{Florian Dörfler:} Writing - Review \& Editing, Funding acquisition. \textbf{Hanmin Cai:} Conceptualization, Supervision, Resources, Writing - Review \& Editing, Funding acquisition.

 \bibliographystyle{elsarticle-num-names} 
 \bibliography{cas-refs}





\end{document}